\newtheorem{theorem}{Theorem}
\newtheorem{proposition}[theorem]{Proposition}
\newtheorem{corollary}[theorem]{Corollary}
\theoremstyle{definition}
\begin{document}
\title{The Baire partial quasi-metric space: A mathematical tool for asymptotic complexity analysis in Computer Science}

\author{M.A. Cerd\`{a}-Uguet$^{1}$, M.P. Schellekens$^{2}$, O. Valero$^{1}$\thanks{Corresponding author e-mail, telephone and fax numbers: o.valero@uib.es, +34 971 259817, +34 971 173003
 }}

\date{$^{1}$Departamento de Ciencias Matem\'{a}ticas e Inform\'{a}tica, Universidad de las Islas Baleares, 07122, Baleares, Spain \\ e-mails: macerda2@educacio.caib.es, o.valero@uib.es\\
$^{2}$Center of Efficiency-Oriented Languages, Department of Computer Science, University College Cork, Western Road, Cork, Ireland \\ e-mail: m.schellekens@cs.ucc.ie}

\maketitle

\begin{abstract}
In 1994, S.G. Matthews introduced the notion of partial metric space in
order to obtain a suitable mathematical tool for program verification [Ann.
New York Acad. Sci. 728 (1994), 183-197]. He gave an application of this new
structure to parallel computing by means of a partial metric version of the celebrated Banach fixed point theorem [Theoret. Comput. Sci. 151\textbf{%
\ }(1995), 195-205]. \ Later on, M.P. Schellekens introduced the theory of
complexity (quasi-metric) spaces as a part of the development of a
topological foundation for the asymptotic complexity analysis of programs and
algorithms [Electronic Notes in Theoret. Comput. Sci. 1 (1995),
211-232]. The applicability of this theory to the asymptotic complexity analysis of
Divide and Conquer algorithms was also illustrated by Schellekens. In particular,
he gave a new proof, based on the use of the aforenamed Banach fixed point theorem, of
the well-known fact that Mergesort algorithm has optimal asymptotic
average running time of computing.\ In this paper, motivated by the utility of partial
metrics in Computer Science,\ we discuss whether the Matthews fixed point theorem
is a suitable tool to analyze the asymptotic complexity of algorithms in the
spirit of Schellekens. Specifically, we show that a slight modification of the well-known
Baire partial metric on the set of all words over an alphabet constitutes an
appropriate tool to carry out the asymptotic complexity analysis of algorithms via fixed point methods without the need for assuming the convergence condition inherent to the definition of the complexity space in the Shellekens framework. Finally, in order to illustrate and to validate the developed theory we apply our results to analyze the
asymptotic complexity of Quicksort, Mergesort and Largesort algorithms.\medskip

\noindent 2010 AMS Classification: 47H10, 54E50, 54F05, 68Q25, 68W40.\medskip

\noindent Keywords: asymptotic complexity analysis, recurrence equation, quasi-metric, partial metric, Baire partial metric,
Baire partial quasi-metric, running time of computing, fixed point.

\end{abstract}

\section{Introduction and preliminaries \label{Int}}

Throughout this paper the letters $\mathbb{R}^{+},$ $\mathbb{N}$
and $\mathbb{\omega }$ will denote the set of
nonnegative real numbers, the set of positive integer numbers and the set of
nonnegative integer numbers, respectively.

In Computer Science the complexity analysis of an algorithm is based
on determining mathematically the quantity of resources needed by
the algorithm in order to solve the problem for which it has been
designed. A typical resource, playing a central role in complexity
analysis, is the execution time or running time of computing. Since there
are often many algorithms to solve the same problem, one objective
of the complexity analysis is to assess which of them is faster when
large inputs are considered. To this end, it is necessary to compare
their running time of computing. This is usually done by means of
the asymptotic analysis in which the running time of an algorithm is
denoted by a function $T:\mathbb{N} \rightarrow (0,\infty]$ in such a way that $T(n)$ represents the time taken
by the algorithm to solve the problem under consideration when the
input of the algorithm is of size $n.$  Let us denote by $\mathcal{RT}$ the set formed by all functions from $\mathbb{N}$ into $(0,\infty]$. Of course the running time of an algorithm does not only depend on the input size $n$, but it
depends also on the particular input of the size $n$ (and the
distribution of the data). Thus the running time of an algorithm is
different when the algorithm processes certain instances of input
data of the same size $n.$ As a consequence, for the purpose of size-based comparisons, it is necessary to
distinguish three possible behaviours in the complexity analysis of
algorithms. These are the so-called best case, the worst case and
the average case. The best case and the worst case for an input of
size $n$ are defined by the minimum and the maximum running time of
computing over all inputs of the size $n$, respectively. The average
case for an input of size $n$ is defined by the expected value or
average over all inputs of the size $n$.

In general, to determine exactly the function which describes the running
time of computing of an algorithm is an arduous task. However, in most
situations is sufficient to know the running time of computing of an
algorithm in an \textquotedblleft approximate\textquotedblright\ way rather than in
an exact one. For this reason the asymptotic complexity analysis of
algorithms, based on the $\mathcal{O}$-notation, is focused on obtaining the \textquotedblleft
approximate\textquotedblright\ running time of computing of an algorithm.  Indeed, if $%
g\in \mathcal{RT}$ denotes the running time of
computing of an algorithm then the statement $g\in \mathcal{O}(f)$,
where $f\in \mathcal{RT},$ means that there
exists $n_{0}\in \mathbb{N}$ and $c\in \mathbb{R}^{+}$ such that $g(n)\leq cf(n)$ for
all $n\in  \mathbb{N}$ such that $n\geq n_{0}$ ($\leq $ stands for the usual order on $\mathbb{R^{+}}$)$.$ So
the function $f$ gives an asymptotic upper bound of the running time $g$
and, thus, an \textquotedblleft approximate\textquotedblright\ information
of the running time of the algorithm. The set $\mathcal{O}(f)$ is called
the asymptotic complexity class of $f.$ Hence, from an asymptotic complexity
analysis viewpoint, determining the running time of an algorithm consists
of obtaining its asymptotic complexity class. For a fuller treatment of
complexity analysis of algorithms we refer the reader to \cite{Bratley,Cormen}.

In 1995, M.P. Schellekens introduced the theory of complexity spaces as a part of the development of a topological foundation for the
asymptotic complexity analysis of algorithms (\cite{Sch95}). This theory is
based on the notion of quasi-metric space.

Let us recall that, following \cite{Ku}, a quasi-metric on a nonempty set $X$
is a function $d:X\times X \rightarrow \mathbb{R^{+}}$ such that for all $%
x,y,z\in X:$ \medskip

$
\begin{array}{ll}
(i) & d(x,y)=d(y,x)=0\Leftrightarrow x=y; \\
(ii) & d(x,y)\leq d(x,z)+d(z,y).
\end{array}
$
\medskip

A quasi-metric space is a pair $(X,d)$ such that $X$ is a nonempty set and $%
d $ is a quasi-metric on $X.$

Each quasi-metric $d$ on $X$ generates a $T_{0}$-topology $\mathcal{T}(d)$
on $X$ which has as a base the family of open $d$-balls $\{B_{d}(x,%
\varepsilon ):x\in X,$ $\varepsilon >0\}$, where $B_{d}(x,\varepsilon
)=\{y\in X:d(x,y)<\varepsilon \}$ for all $x\in X$ and $\varepsilon >0.$

Given a quasi-metric $d$ on $X$, the function $d^{s}$ defined on
$X\times X$ by $d^{s}(x,y)=\max\left(d(x,y),d(y,x)\right)$ is a
metric on $X$.

A quasi-metric space $(X,d)$ is called bicomplete if the metric space $%
(X,d^{s})$ is complete.

A well-known example of a bicomplete quasi-metric space is the pair $((0,\infty],u_{-1}),$ where
$$u_{-1}(x,y)=\max\left( \frac{1}{y}-\frac{1}{x}, 0\right)$$ for all $x,y\in (0,\infty]$.

The quasi-metric space $((0,\infty],u_{-1})$ plays a central role in the Schellekens theory. Indeed, let us recall that the complexity space is the pair
$(\mathcal{C},d_{\mathcal{C}}),$ where
$$\mathcal{C}=\{f\in \mathcal{RT} :\sum_{n=1}^{\infty
}2^{-n}\frac{1}{f(n)}<\infty \}$$
and $d_{\mathcal{C}}$ is the quasi-metric on $\mathcal{C}$ defined by
$$d_{\mathcal{C}}(f,g)=\sum_{n=1}^{\infty }2^{-n}\max\left(\frac{1}{g(n)}-\frac{1}{f(n)},0\right).$$

Obviously we adopt the convention that $\frac{1}{\infty }=0.$

According to \cite{Sch95}, from a complexity analysis point of view, it is
possible to associate a function of $\mathcal{C}$ with each algorithm in
such a way that such a function represents, as a function of the size of the
input data, the running time of computing of the algorithm$.$ Because of this, the elements
of $\mathcal{C}$ are called complexity functions. Moreover, given two
functions $f,g\in \mathcal{C}$, the numerical value $d_{\mathcal{C}}(f,g)$
(the complexity distance from $f$ to $g$) can be interpreted as the relative
progress made in lowering the complexity by replacing any program $P$ with
complexity function $f$ by any program $Q$ with complexity function $g$.
Therefore, if $f\neq g,$ the condition $d_{\mathcal{C}}(f,g)=0$ can be
read as $f$ is \textquotedblleft at least as efficient\textquotedblright\ as $%
g $ on all inputs (i.e. $d_{\mathcal{C}}(f,g)=0\Leftrightarrow f(n)\leq g(n)$
for all $n\in \mathbb{N} $). Thus we can encode the natural order relation on the
set $\mathcal{C},$ induced by the pointwise order $\leq ,$ through the quasi-metric $d_{%
\mathcal{C}}.$ In particular the fact that $d_{\mathcal{C}}(f,g)=0$ implies
that $f\in \mathcal{O}(g).$

The applicability of this theory to the asymptotic complexity analysis of Divide and Conquer algorithms was illustrated by Schellekens in \cite{Sch95}. In
particular, he gave a new proof of the well-known fact that Mergesort algorithm has optimal asymptotic average
running time of computing. To do this he introduced a method, based on a
fixed point theorem for functionals defined on the complexity space into
itself, to analyze the general class of Divide and Conquer algorithms. Let
us recall the aforenamed method.

A Divide and Conquer algorithm solves a problem of size $n$ ($n\in \mathbb{N} $)
splitting it into $a$ subproblems of size $\frac{n}{b},$ for some constants $%
a,b$ with $a,b\in \mathbb{N}$ and $a,b>1,$ and solving them separately by
the same algorithm. After obtaining the solution of the subproblems, the
algorithm combines all subproblem solutions to give a global solution to the
original problem$.$ The recursive structure of a Divide and Conquer algorithm
leads to a recurrence equation for the running time of computing. In many
cases the running time of a Divide and Conquer algorithm is the solution to a
recurrence equation of the form
\begin{equation}
T(n)=\left\{
\begin{array}{ll}
c & \text{\textrm{if }}n=1 \\
aT(\frac{n}{b})+h(n) & \text{\textrm{if }}n\in \omega _{b}%
\end{array}%
\right. ,  \label{typrec}
\end{equation}%
where $\omega _{b}=\{b^{k}:k\in \mathbb{N}\}$, $c\in \mathbb{R^{+}}$ ($c>0$) denotes the complexity on the base case (i.e. the problem
size is small enough and the solution takes constant time), $h(n)$
represents the time taken by the algorithm in order to divide the original
problem into $a$ subproblems and to combine all subproblems solutions into a
unique one ($h\in \mathcal{RT}$ and $0<h(n)<\infty $ for all $n\in \mathbb{N}$).

Notice that for Divide and Conquer algorithms, it is typically sufficient to obtain the complexity on inputs of size $n$ where $n$ ranges over the
set $\omega _{b}$ (\cite{Bratley,Cormen,Sch95})$.$

Typical examples of algorithms whose running time of computing can be obtained by means of the recurrence (\ref{typrec}) are Quicksort (in the best case behaviour) and Mergesort.

In order to compute the running time of computing of a Divide and Conquer
algorithm satisfying the recurrence equation (\ref{typrec}), it is necessary
to show that such a recurrence equation has a unique solution and, later, to
obtain the asymptotic complexity class of such a solution. The method
introduced by Schellekens to show that the equation (\ref{typrec}) has a
unique solution, and to obtain the asymptotic complexity class of the
solution is outlined below:

Denote by $\mathcal{C}_{b,c}$ the subset of $\mathcal{C}$ given
by
\begin{equation*}
\mathcal{C}_{b,c}=\{f\in \mathcal{C}:f(1)=c\text{ and }f(n)=\infty \text{
for all }n\notin \omega _{b}\text{ with }n>1\}.
\end{equation*}

Since the quasi-metric space $(\mathcal{C},d_{\mathcal{C}})$ is bicomplete
(see Theorem 3 and Remark in page 317 of \cite{RomSch}) and the set $%
\mathcal{C}_{b,c}$ is closed in $(\mathcal{C},d_{\mathcal{C}}^{s}),$ we have
that the quasi-metric space $(\mathcal{C}_{b,c},d_{\mathcal{C}}|_{\mathcal{C}%
_{b,c}})$ is bicomplete.

Next we associate a functional $\Phi _{T}:\mathcal{C}_{b,c}\rightarrow
\mathcal{C}_{b,c}$ with the recurrence equation (\ref{typrec}) of a Divide and Conquer
algorithm defined as follows:

\begin{equation}
\Phi _{T}(f)(n)=\left\{
\begin{array}{ll}
c & \text{\textrm{if }}n=1 \\
\infty  & \text{\textrm{if }}n\notin \omega _{b} \textrm{ and } n>1\\
af(\frac{n}{b})+h(n) & \text{\textrm{otherwise}}%
\end{array}%
\right. .  \label{funct}
\end{equation}%
Of course a complexity function in $\mathcal{C}_{b,c}$ is a solution to the
recurrence equation (\ref{typrec}) if and only if it is a fixed point of the
functional $\Phi_{T} .$ It was proved in \cite{Sch95} that
\begin{equation}
d_{\mathcal{C}}|_{\mathcal{C}_{b,c}}(\Phi _{T}(f),\Phi _{T}(g))\leq \frac{1}{%
a}d_{\mathcal{C}}|_{\mathcal{C}_{b,c}}(f,g)  \label{ineq}
\end{equation}%
for all $f,g\in \mathcal{C}_{b,c}.$ So by Banach's fixed point theorem for
metric spaces we deduce that the functional $\Phi_{T} :\mathcal{C}%
_{b,c}\rightarrow \mathcal{C}_{b,c}$ has a unique fixed point and, thus, the
recurrence equation (\ref{typrec}) has a unique solution.

In order to obtain the asymptotic complexity class of the solution to the
recurrence equation (\ref{typrec}), Schellekens introduced a special class
of functionals known as improvers.

Let $C\subseteq \mathcal{RT},$ a functional $\Phi :C\rightarrow C$ is called
an \textit{improver} with respect to a function $f\in C$ provided that $\Phi
^{n+1}(f)\leq \Phi ^{n}(f)$ for all $n\in \omega .$ Of course $\Phi ^{0}(f)=f$.

Observe that an improver is a functional which corresponds to a transformation on programs in such a way that the iterative applications of the transformation yield an improved, from a complexity point of view, program at each step of the iteration.

Note that when $\Phi $ is monotone, to show that $\Phi $ is an improver with
respect to $f\in C$, it suffices to verify that $%
\Phi (f)\leq f.$

Under these conditions the following result was stated in \cite{Sch95}:

\begin{theorem}
\label{Schell}A Divide and Conquer recurrence of the form (\ref{typrec}) has
a unique solution $f_{T}$ in $\mathcal{C}_{b,c}$. Moreover, if the monotone functional $\Phi_{T}$
associated to (\ref{typrec}), and given by (\ref{funct}), is an improver with respect to some function $%
g\in \mathcal{C}_{b,c},$ then the solution to the recurrence equation satisfies that $f_{T}\in \mathcal{O}(g)$.
\end{theorem}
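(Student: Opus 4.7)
The plan is to derive both parts of the theorem from Banach's fixed point theorem applied in the complete metric space $(\mathcal{C}_{b,c}, d_{\mathcal{C}}^{s}|_{\mathcal{C}_{b,c}})$. First, I would upgrade the contractive inequality (\ref{ineq}) to the symmetric metric $d_{\mathcal{C}}^{s}$: applying (\ref{ineq}) to both $(f,g)$ and $(g,f)$ and taking the maximum yields
\begin{equation*}
d_{\mathcal{C}}^{s}\bigl(\Phi _{T}(f),\Phi _{T}(g)\bigr)\leq \tfrac{1}{a}\,d_{\mathcal{C}}^{s}(f,g)
\end{equation*}
for all $f,g\in \mathcal{C}_{b,c}$. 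Since $a>1$ and $\mathcal{C}_{b,c}$ is closed in the bicomplete space $(\mathcal{C},d_{\mathcal{C}})$, Banach's fixed point theorem produces a unique fixed point $f_{T}\in \mathcal{C}_{b,c}$. As already noted, fixed points of $\Phi _{T}$ in $\mathcal{C}_{b,c}$ coincide with solutions of (\ref{typrec}) in $\mathcal{C}_{b,c}$, so $f_{T}$ is the unique such solution.

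For the second part, suppose $\Phi _{T}$ is monotone and an improver with respect to $g\in \mathcal{C}_{b,c}$. Then $\Phi _{T}^{n+1}(g)\leq \Phi _{T}^{n}(g)\leq g$ pointwise for every $n\in \omega $, while Banach's theorem guarantees that $(\Phi _{T}^{n}(g))_{n}$ converges to $f_{T}$ in $d_{\mathcal{C}}^{s}$. The decisive step is to transfer this metric convergence to coordinatewise convergence: for each fixed $k\in \mathbb{N}$,
\begin{equation*}
2^{-k}\left| \frac{1}{\Phi _{T}^{n}(g)(k)}-\frac{1}{f_{T}(k)}\right| \leq d_{\mathcal{C}}\bigl(\Phi _{T}^{n}(g),f_{T}\bigr)+d_{\mathcal{C}}\bigl(f_{T},\Phi _{T}^{n}(g)\bigr)\leq 2\,d_{\mathcal{C}}^{s}\bigl(\Phi _{T}^{n}(g),f_{T}\bigr),
\end{equation*}
so $1/\Phi _{T}^{n}(g)(k)\to 1/f_{T}(k)$ and hence $\Phi _{T}^{n}(g)(k)\to f_{T}(k)$ in $(0,\infty ]$. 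Passing the pointwise bound $\Phi _{T}^{n}(g)(k)\leq g(k)$ to the limit yields $f_{T}(k)\leq g(k)$ for every $k\in \mathbb{N}$. This gives $d_{\mathcal{C}}(f_{T},g)=0$, which by the characterisation recalled in the introduction forces $f_{T}(n)\leq g(n)$ for all $n$ and therefore $f_{T}\in \mathcal{O}(g)$.

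The conceptual core is routine (Banach plus an improver-type monotone argument); the one nontrivial point is the upgrade from convergence in the aggregated quasi-metric $d_{\mathcal{C}}^{s}$ to genuine coordinatewise convergence on $\mathbb{N}$, which is what allows the improver inequality to survive the limiting procedure and deliver the asymptotic complexity class.
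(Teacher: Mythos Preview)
Your argument is correct. For the first part you symmetrise the contraction inequality~(\ref{ineq}) to $d_{\mathcal C}^{s}$ and invoke Banach, which is exactly the route the paper sketches in the paragraph preceding the statement of Theorem~\ref{Schell}. For the second part your limit argument (metric convergence of the Picard iterates $\Phi_T^{n}(g)$ to $f_T$ implies coordinatewise convergence of the reciprocals, and the improver bound $\Phi_T^{n}(g)\leq g$ survives the limit) is a clean and valid way to close the gap.

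It is worth noting, however, that the paper does \emph{not} supply its own proof of Theorem~\ref{Schell}: the result is merely quoted from Schellekens~\cite{Sch95} (``the following result was stated in~\cite{Sch95}''), with only the existence/uniqueness half outlined beforehand. So there is nothing in the present paper to compare your improver argument against. The paper's own contribution in this direction comes later (Theorem~\ref{UsefulTh1} and Corollary~\ref{CorUseful1}), where the analogous conclusion is obtained in the Baire partial quasi-metric framework via Proposition~\ref{asymp}: rather than passing pointwise limits, one uses the contraction directly to show $d_{q_B}(v,u)=0$ whenever $d_{q_B}(\Theta(u),u)=0$. That device avoids the coordinatewise-convergence step entirely, at the cost of working in the asymmetric setting; your approach stays inside the symmetrised metric and is perhaps more transparent, though it requires the extra observation that $d_{\mathcal C}^{s}$-convergence controls each individual term of the series.
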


In \cite{Sch95} Schellekens discussed the complexity class of Mergesort in order to illustrate the utility of Theorem \ref{Schell}. In the
particular case of Mergesort the recurrence equation (\ref{typrec}%
) in the average case behaviour is exactly%
\begin{equation}
T(n)=\left\{
\begin{array}{ll}
c & \text{\textrm{if }}n=1 \\
2T(\frac{n}{2})+\frac{n}{2} & \text{\textrm{if }}n\in \omega _{2}
\end{array}%
\right. .  \label{recmerge}
\end{equation}

Of course Theorem \ref{Schell} provides that the recurrence equation (\ref{recmerge}) has a unique solution $f_{T}\in \mathcal{C}_{2,c}$. In addition, Schellekens proved in  \cite%
{Sch95} that the functional $\Phi _{T}$ induced by the recurrence equation (%
\ref{recmerge}) is an improver with respect to a complexity function $g_{k}\in
\mathcal{C}_{b,c}$, with $k\in \mathbb{R^{+}}$ and $g_{k}(n)=kn\log _{2}(n)$ for all $n\in \omega _{2}$,
if and only if $\frac{1}{2}\leq k.$ Therefore, by Theorem \ref{Schell}, we
conclude that $f\in \mathcal{O}(g_{\frac{1}{2}}),$ i.e. Theorem \ref{Schell}
provides a formal proof of the well-known fact that the running time of
computing of Mergesort in the average case behaviour is in the asymptotic
complexity class of $n\log _{2}(n).$

When a program uses a recursion process to find the solution to a problem,
such a process is characterized by obtaining in each step of the computation
an approximation to the aforementioned solution which is better than the
approximations obtained in the preceding steps and, in addition, by
obtaining always the final approximation to the problem solution as the \textquotedblleft limit\textquotedblright\ of the computing process. A mathematical model to this sort of situations was developed by D.S. Scott
which is based on ideas from order theory and topology (see, for instance,
\cite{Scott}, \cite{Scott2} and \cite{Davey}). In particular, the order
represents some notion of information in such a way that each step of the
computation is identified with an element of the mathematical model which is
greater than (or equal to) the other ones associated with the preceding steps,
since each approximation gives more information about the final solution
than those computed before. The final output of the computational
process is seen as the limit of the successive approximations. Thus the
recursion processes are modeled as increasing sequences of elements of the
ordered set which converge to its least upper bound with respect to the given
topology. From an Information Theory point of view the Scott model needs to
distinguish two kind of elements: the so-called totally defined objects and
the partially defined objects. The former represent elements of the model
whose information content can not be stored in a computer and, therefore,
must be approximated. The partially defined objects match with those
elements of the model that can be stored in the computer and that
approximate the total ones.

In 1994 S.G. Matthews introduce the notion of partial metric as a
mathematical tool to model computational processes in the spirit of Scott
where a quantitative degree of the information content of the involved
elements is needed.

Let us recall some notions related to partial metrics in order to
explain the importance of this new metric concept.

According to \cite{Ma}, a partial metric on a nonempty set $X$ is a function
$p:X\times X\rightarrow \mathbb{R}^{+}$ such that for all $x,y,z\in
X: \medskip $

$\begin{array}{ll}
(i) & p(x,x)=p(x,y)=p(y,y)\Leftrightarrow x=y; \\
(ii) & p(x,x)\leq p(x,y); \\
(iii) & p(x,y)=p(y,x); \\
(iv) & p(x,y)\leq p(x,z)+p(z,y)-p(z,z).%
\end{array}%
\medskip $

A partial metric space is a pair $(X,p)$ such that $X$ is a nonempty set and
$p$ is a partial metric on $X.$

Each partial metric $p$ on $X$ generates a $T_{0}$ topology $\mathcal{T}(p)$
on $X$ which has as a base the family of open $p$-balls $\{B_{p}(x,%
\varepsilon ):x\in X,\varepsilon >0\}$, where $B_{p}(x,\varepsilon )=\{y\in
X:p(x,y)<p(x,x)+\varepsilon \}$ for all $x\in X$ and $\varepsilon >0.$ From
this fact it immediately follows that a sequence $(x_{n})_{n\in \mathbb{N}}$ in a partial
metric space $(X,p)$ converges to a point $x\in X$ with respect to $\mathcal{%
T}(p)\Leftrightarrow p(x,x)=\lim_{n\rightarrow \infty }p(x,x_{n}).$

As usual (\cite{Davey}) an order on a (nonempty) set $X$ is a reflexive, antisymmetric and transitive binary
relation $\leq $ on $X$, and a set equipped with an order is said to be an ordered set.



The success of partial metrics lies in that every partial metric $p$ induces
an order $\leq _{p}$ on $X$ ($x\leq _{p}y\Leftrightarrow p(x,y)=p(x,x)$) in
such a way that increasing sequences of elements with respect to $\leq _{p}$
converge to its least upper bound with respect the partial metric topology $%
\mathcal{T}(p)$. Moreover, partial metrics can be used to distinguish
between objects with totally defined information content and objects with
partially defined information content in Scott's models. Specifically if $%
(X,p)$ is a partial metric space, then the numerical value $p(x,x)$ allows
us to quantify the amount of information contained in an object $x$. In
particular the smaller $p(x,x)$ the more defined $x$ is, being $x$ totally
defined if $p(x,x)=0$.

Motivated by the usefulness of partial metric spaces in Computer Science, it
seems natural to wonder if these kind of metric spaces are also useful in
complexity analysis in the spirit of Schellekens. However we show in the
following that the preceding question has at first a negative answer. To
this end, let us recall some additional and useful concepts about partial
metrics.

Following \cite{Ma}, a sequence $(x_{n})_{n\in \mathbb{N}}$ in a partial
metric space $(X,p)$ is called a Cauchy sequence if  $%
\lim_{n,m\rightarrow \infty }p(x_{n},x_{m})$ exists. A partial metric space $(X,p)$
is said to be complete if every Cauchy sequence $(x_{n})_{n\in \mathbb{N}}$
in $X$ converges, with respect to $\mathcal{T}(p),$ to a point $x\in X$ such
that $p(x,x)=\lim_{n,m\rightarrow \infty }p(x_{n},x_{m}).$

Inspired, in part, by the applications to program verification Matthews
extends Banach's fixed point theorem to the framework of partial metric
spaces, and he used it to formulate a suitable test for lazy data flow
deadlock in Khan's model of parallel computation (\cite{Ma2}). The
aforementioned new theorem can be stated as follows:

\begin{theorem}
\label{Matth}Let $f$ be a mapping of a complete partial metric space $(X,p)$
into itself such that there is $s\in \mathbb{R}^{+}$ with $0\leq s<1,$
satisfying
$$p(f(x),f(y))\leq sp(x,y), $$
for all $x,y\in X.$ Then $f$ has a unique fixed point. Moreover if $x\in X$
is the fixed point of $f,$ then $p(x,x)=0$.
\end{theorem}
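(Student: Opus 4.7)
The plan is to mimic the classical Banach argument, adapting each step to the partial-metric setting where self-distances need not vanish and where convergence in $\mathcal{T}(p)$ is governed by the identity $p(x,x) = \lim_{n\to\infty} p(x,x_n)$. I would fix an arbitrary $x_0 \in X$, form the Picard orbit $x_{n+1} = f(x_n)$, and then show it is Cauchy in the partial-metric sense and identify its limit as the unique fixed point.

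First I would establish three geometric decay estimates by iterating the contraction hypothesis: the diagonal bound $p(x_n, x_n) \leq s^n p(x_0, x_0)$, the consecutive bound $p(x_n, x_{n+1}) \leq s^n p(x_0, x_1)$, and, via the weak triangle inequality $p(a,b) \leq p(a,c) + p(c,b)$ that follows from axiom (iv) together with nonnegativity of $p$, a telescoping uniform bound $p(x_0, x_k) \leq M$ for a constant $M$ depending only on $x_0$, $x_1$, and $s$. Chaining the contraction $n$ times yields, for $n \leq m$, the estimate $p(x_n, x_m) \leq s^n p(x_0, x_{m-n}) \leq s^n M$. Therefore $\lim_{n,m\to\infty} p(x_n, x_m) = 0$, so $(x_n)$ is Cauchy and by completeness converges in $\mathcal{T}(p)$ to some $x \in X$ with $p(x,x) = \lim_{n,m\to\infty} p(x_n, x_m) = 0$. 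This already yields the ``moreover'' claim, and the characterization of $\mathcal{T}(p)$-convergence now forces $p(x, x_n) \to 0$.

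To verify $x = f(x)$, I would use axioms (iii) and (iv) together with the contraction to bound $p(x, f(x)) \leq p(x, x_{n+1}) + s\, p(x_n, x)$, and let $n \to \infty$ to conclude $p(x, f(x)) = 0$. Axiom (ii) then gives $p(x,x) = p(f(x), f(x)) = 0 = p(x, f(x))$, and axiom (i) delivers $x = f(x)$. Uniqueness is immediate: any other fixed point $y$ satisfies $p(y,y) \leq s\, p(y,y)$, hence $p(y,y) = 0$, and then $p(x,y) \leq s\, p(x,y)$ gives $p(x,y) = 0$, so axiom (i) forces $x = y$.

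The step I expect to be the most delicate is the Cauchy estimate. In the ordinary metric case one obtains geometric decay directly from the triangle inequality, but here axiom (iv) carries a compensating term $-p(z,z)$, and the self-distances $p(x_n, x_n)$ are a priori nonzero, so mere geometric decay of consecutive terms does not by itself guarantee that $\lim_{n,m} p(x_n, x_m)$ exists. The key observation is that the contraction applied diagonally forces the self-distances down as well, after which the telescoping bound combined with a single chaining of the contraction closes the argument cleanly.
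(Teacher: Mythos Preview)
The paper does not contain a proof of this theorem. Theorem~\ref{Matth} is quoted as Matthews' extension of the Banach fixed point theorem to partial metric spaces, with the proof deferred to the original references \cite{Ma,Ma2}; it is used in the paper only as a black box. There is therefore no ``paper's own proof'' to compare your proposal against.

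That said, your argument is essentially the standard proof and is correct. The three decay estimates are sound: iterating the contraction on the diagonal gives $p(x_n,x_n)\le s^n p(x_0,x_0)$; the weak triangle inequality $p(a,b)\le p(a,c)+p(c,b)$ (axiom~(iv) after dropping the nonnegative term $p(c,c)$) telescopes to give $p(x_0,x_k)\le p(x_0,x_1)/(1-s)$; and writing $x_m=f^n(x_{m-n})$ for $m\ge n$ yields $p(x_n,x_m)\le s^n p(x_0,x_{m-n})\le s^n M$. Hence $\lim_{n,m}p(x_n,x_m)=0$, the sequence is Cauchy, and completeness produces $x$ with $p(x,x)=0$ and $p(x,x_n)\to 0$. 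Your verification that $x=f(x)$ via $p(x,f(x))\le p(x,x_{n+1})+s\,p(x_n,x)$ and your uniqueness argument are both fine. The one place to be slightly more explicit is the passage from $p(x,f(x))=0$ to $x=f(x)$: you need all three of $p(x,x)$, $p(x,f(x))$, $p(f(x),f(x))$ equal (to zero) before invoking axiom~(i), and axiom~(ii) applied to $p(f(x),f(x))\le p(x,f(x))$ supplies the last piece, exactly as you indicate.
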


According to \cite{Ma} and \cite{ORS} some correspondences between
quasi-metric and partial metric spaces can be stated.

\begin{proposition}
\label{weight}If $(X,p)$\ is a partial metric space, then the following statements hold:
\begin{itemize}
\item[(i)] The function $d_{p}:X\times X\rightarrow \mathbb{R}^{+}$ defined by $%
d_{p}(x,y)=p(x,y)-p(x,x)$ is a quasi-metric on $X$ such that $\mathcal{T}(p)=%
\mathcal{T}(d_{p}).$
\item[(ii)] \label{complete} The below
assertions are equivalent:
\begin{itemize}
\item[(1)] $(X,p)$ is complete

\item[(2)] $(X,d_{p})$ is bicomplete.
\end{itemize}
\end{itemize}
\end{proposition}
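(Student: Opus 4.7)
The plan for (i) is to verify each of the two quasi-metric axioms directly from the four partial-metric axioms of $p$, and then match the two bases. Non-negativity of $d_p$ and the antisymmetric separation property follow from axioms (ii) and (i) of $p$: if $d_p(x,y)=d_p(y,x)=0$, the symmetry of $p$ yields $p(x,x)=p(x,y)=p(y,y)$, forcing $x=y$. The quasi-triangle inequality for $d_p$ is obtained by subtracting $p(x,x)$ from both sides of axiom (iv). For the topologies, a one-line unwinding shows that $B_{d_p}(x,\varepsilon)=B_p(x,\varepsilon)$ for all $x\in X$ and $\varepsilon>0$, so the two bases coincide and $\mathcal{T}(p)=\mathcal{T}(d_p)$.

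For (ii), the first step is to prove that the Cauchy sequences of $(X,p)$ and those of $(X,d_p^s)$ coincide. Using the identity $d_p^s(x,y)=p(x,y)-\min(p(x,x),p(y,y))$ together with the elementary bound $|p(x_n,x_n)-p(x_m,x_m)|\le d_p^s(x_n,x_m)$, a $d_p^s$-Cauchy sequence forces $(p(x_n,x_n))$ to converge in $\mathbb{R}^+$ to some $L$, and then $p(x_n,x_m)\to L$; conversely, from $\lim_{n,m}p(x_n,x_m)=L$ one recovers $p(x_n,x_n)\to L$ (setting $n=m$) and $d_p^s(x_n,x_m)\to 0$.

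With the Cauchy equivalence in hand, I would then transfer limits, using that $\mathcal{T}(p)$-convergence of $(x_n)$ to $x$ is equivalent to $d_p(x,x_n)\to 0$. For $(1)\Rightarrow(2)$: given a $d_p^s$-Cauchy sequence, completeness of $(X,p)$ produces $x$ with $x_n\to x$ in $\mathcal{T}(p)$ and $p(x,x)=L=\lim p(x_n,x_n)$; then $p(x,x_n)\to p(x,x)$ gives $d_p(x,x_n)\to 0$, and writing $d_p(x_n,x)=p(x,x_n)-p(x_n,x_n)$ shows that $d_p(x_n,x)\to 0$ as well, so $d_p^s(x_n,x)\to 0$. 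For $(2)\Rightarrow(1)$: a $p$-Cauchy sequence is $d_p^s$-Cauchy, hence $d_p^s$-converges to some $x$, and a short calculation combining both one-sided limits $d_p(x,x_n)\to 0$ and $d_p(x_n,x)\to 0$ with the Cauchy equivalence identifies $p(x,x)$ with $L$, giving the required partial-metric convergence.

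The main obstacle is the careful bookkeeping in (ii): the partial-metric completeness condition demands not only $\mathcal{T}(p)$-convergence but also $p(x,x)=\lim_{n,m\to\infty}p(x_n,x_m)$, and the equivalence hinges on showing that the diagonal values $p(x_n,x_n)$ track this common limit in both frameworks. Once that is pinned down, everything else reduces to routine algebra with the partial-metric axioms and the already-established identity $\mathcal{T}(p)=\mathcal{T}(d_p)$ from part (i).
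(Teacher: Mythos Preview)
Your argument is correct and follows the standard route. Note, however, that the paper does not actually prove this proposition: it is stated as a known fact, attributed to Matthews \cite{Ma} and to Oltra, Romaguera and S\'anchez-P\'erez \cite{ORS}, and is used throughout without further justification. Your verification---checking the quasi-metric axioms directly, identifying the open balls $B_{d_p}(x,\varepsilon)=B_p(x,\varepsilon)$, and then reducing the completeness equivalence to the coincidence of Cauchy sequences via the identity $d_p^s(x,y)=p(x,y)-\min(p(x,x),p(y,y))$---is precisely the argument one finds in those references, so there is nothing to compare against in the present paper itself.
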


Following \cite{ORS} the set $\mathcal{C}$ can be endowed with a partial
metric $p_{\mathcal{C}}$ defined for all $f,g\in \mathcal{C}$ by
\begin{equation*}
p_{\mathcal{C}}(f,g)=\sum_{n=1}^{\infty }2^{-n}\max\left(\frac{1}{%
f(n)},\frac{1}{g(n)}\right) .
\end{equation*}%
Moreover it is not hard to see that the partial metric $p_{\mathcal{C}}$
induces, by Proposition \ref{weight}, the quasi-metric $d_{\mathcal{C}}.$ So
Proposition \ref{complete} guarantees that the partial metric space $(%
\mathcal{C},p_{\mathcal{C}})$ is complete. Also $(\mathcal{C}_{b,c},p_{%
\mathcal{C}}|_{\mathcal{C}_{b,c}})$ is complete, since $(\mathcal{C}%
_{b,c},d_{\mathcal{C}}|_{\mathcal{C}_{b,c}})$ is bicomplete.

Now suppose that there exists $s\in \mathbb{R}^{+}$ with $0\leq s<1$ such
that
\begin{equation*}
p_{\mathcal{C}}|_{\mathcal{C}_{b,c}}(\Phi_{T} (f),\Phi_{T} (g))\leq sp_{\mathcal{C}%
}|_{\mathcal{C}_{b,c}}(f,g).
\end{equation*}%
for all $f,g\in \mathcal{C}_{b,c}$.

We only consider the case of $0<s<1,$ because it is evident that the case $%
s=0$ gives a contradiction.

Take $f,g\in \mathcal{C}_{b,c}$ defined by $f(n)=2c$ and $g(n)=2(c+1)$ for
all $n\in \omega _{b}$. It is clear that%
\begin{equation*}
p_{\mathcal{C}}|_{\mathcal{C}_{b,c}}(\Phi_{T} (f),\Phi_{T} (g))=\frac{1}{2c}%
+\sum_{n=2}^{\infty }2^{-b^{n}}\max\left(\frac{1}{2ac+h(b^{n})},\frac{1}{%
2a(c+1)+h(b^{n})}\right).
\end{equation*} Moreover,

\begin{eqnarray}
p_{\mathcal{C}}|_{\mathcal{C}_{b,c}}(f,g)&=&\sum_{n=1}^{\infty
}2^{-b^{n}}\max\left( \frac{1}{f(b^{n})},\frac{1}{g(b^{n})}\right) \leq
\sum_{n=1}^{\infty }2^{-n}\max\left( \frac{1}{2c},\frac{1}{2(c+1)}\right) \nonumber \\
&=&\frac{1}{2c} \nonumber.
\end{eqnarray}%

Applying the hypothesis we obtain that

$$\frac{1}{2c} \leq p_{\mathcal{C}}|_{\mathcal{C}_{b,c}}(\Phi_{T} (f),\Phi_{T} (g))\leq sp_{\mathcal{C%
}}|_{\mathcal{C}_{b,c}}(f,g)\leq s\frac{1}{2c}.$$ As a result we deduce that $1\leq s<1,$ which is a contradiction.

Consequently Theorem \ref{Matth} can not be used to analyze the complexity
of the algorithms whose running time of computing is associated to a
recurrence equation (\ref{typrec}) when the partial metric\ $p_{\mathcal{C}%
}, $ instead the quasi-metric $d_{\mathcal{C}}$, is employed as a complexity
distance.

In the light of the preceding conclusion our purpose in this paper
is to demonstrate that partial metric spaces, and in particular the
partial metric fixed point theorem, can be used satisfactorily for the asymptotic complexity analysis of algorithms in the spirit of Schellekens' approach based on improvers. To achieve this goal
we focus our attention on a slight modification, that we have called
Baire partial quasi-metric, of the well-known Baire partial metric
on the set of all words over an alphabet. We show that such a
partial metric tool constitutes an appropriate implement to carry
out formally the asymptotic complexity analysis of algorithms whose
running time of computing leads to various types of recurrences
equations (not only the Divide and Conquer ones). In particular,
we validate the usefulness of our new results by means of applying them
to analyze the well-known asymptotic complexity of several
illustrative algorithms such as Quicksort, Mergesort and Largetwo.

The remainder of the paper is organized as follows: Section
\ref{BPM} is devoted to introduce the noted Baire partial metric
and, in addition, to prove that it allows to show the existence of solution to recurrence
equations that arise in a natural way in complexity analysis of algorithms. However, in the
same section we show that the Baire partial metric can not be
employed, in general, to obtain an asymptotic upper bound, and thus the complexity class, of the
running time of computing of an algorithm. Inspired by the last
fact, we introduce in Section \ref{BPQM} a new Baire partial
metric framework whose basis resides in the partial quasi-metric
approach introduced recently in \cite{Ku2}. We show that this new partial metric approach presents a relevant advantage with respect to the Schellekens one. More specifically, it is suitable to carry out the asymptotic complexity analysis of algorithms via fixed point methods without the need for assuming the convergence condition inherent to the definition of the complexity space in the Shellekens framework. Finally, the discussion
of the running time of the celebrated Quicksort, Mergesort and
Largetwo allows us, on one hand, to validate our new results and, on the other hand, to show the potential applicability of
the developed theory to complexity analysis in Computer Science.

\section{The Baire partial metric space and the asymptotic complexity analysis of
algorithms\label{BPM}}

With the aim of applying partial metric spaces to complexity
analysis of algorithms we recall the well-known Baire partial metric
space which was used by Matthews to model Kahn's parallel
computation processes (\cite{Kahn,Ma2}).

Let $\Sigma $ be a nonempty set (an alphabet). Denote by $\Sigma ^{\infty }$
the set of all finite and infinite sequences (words) over $\Sigma .$

For each $x\in \Sigma ^{\infty }$ we denote by $l(x)$ the length of $x$.
Hence $l(x)\in \lbrack 1,\infty ]$.

From now on, if $x\in \Sigma ^{\infty }$ with $l(x)=\infty $ we will write $%
x:=x_{1}x_{2}\ldots,$ and if $x\in \Sigma ^{\infty }$ with
$l(x)=n<\infty $ we will write $x:=x_{1}x_{2} \ldots x_{n}.$

Given $x,y\in \Sigma ^{\infty },$ denote by $l(x,y)$ the length of the
longest common prefix of $x$ and $y,$ i.e. $\emph{l}(x,y)=\sup \{n\in
\mathbb{N}:x_{k}=y_{k}$ whenever $k\leq n\}$ if $x$ and $y$ have a common
prefix, and $l(x,y)=0$ otherwise.

As usual the set $\Sigma ^{\infty }$ is equipped with the prefix order $%
\sqsubseteq ,$ which is defined by $x\sqsubseteq y\Leftrightarrow x$ is a
prefix of $y.$

Define on $\Sigma ^{\infty }\times \Sigma ^{\infty }$ the
nonnegative real valued function $p_{B}$ by
\begin{equation*}
p_{B}(x,y)=2^{-\emph{l}(x,y)}
\end{equation*}
for all $x,y\in \Sigma ^{\infty }$. Of course it is adopted the
convention that $2^{-\infty }=0$. It is not hard to see that the
pair $(\Sigma ^{\infty },p_{B})$ is a complete partial metric space
(\cite{ORS}), which is known as the Baire partial metric space.

The partial metric $p_{B}$ can be used to distinguish between
infinite words (objects with totally defined information content)
and finite words (objects with partially defined information
content) because $p_{B}(x,x)=0$ for some $x\in \Sigma ^{\infty
}\Leftrightarrow l(x)=\infty $. The fact that the value $p_{B}(x,x)$
allows to assign a degree of information content plays a crucial
role in Information Theory and, in addition, presents an advantage
with respect to the role played by the classical metrics extensively
used before, as for instance the well-known Baire metric (for a
fuller treatment of the classical Baire metric we refer the reader
to \cite{Calude,Pin}).

The Baire partial metric space $(\Sigma ^{\infty },p_{B})$ induces, by
Proposition \ref{weight}, the quasi-metric $d_{p_{B}}:\Sigma ^{\infty
}\times \Sigma ^{\infty }\rightarrow \mathbb{R}^{+}$ given by
\begin{equation*}
d_{p_{B}}(x,y)=2^{-\emph{l}(x,y)}-2^{-l(x)}
\end{equation*}
for all $x,y\in \Sigma ^{\infty }.$ Note that, similarly to the case of
complexity space (see Section \ref{Int}), we have $d_{p_{B}}(x,y)=0%
\Leftrightarrow x\sqsubseteq y.$

The next result provides that the Baire partial metric space is
suitable to show that the recurrence (\ref{typrec}) has a unique
solution.

\begin{theorem}
\label{Mattwords}Let $\Sigma =(0,\infty ]$ and $a,b\in \mathbb{N}$ with $a,b>1.$
Fix $z\in \Sigma ^{\infty }$ with $l(z)=\infty $ and $%
z_{k}\neq \infty $ for all $k\in \omega _{b}$ and $k\geq 2$. Let $\Sigma
_{b,c}^{\infty }$ be the subset of $\Sigma ^{\infty }$ given by $\Sigma
_{b,c}^{\infty }:=\{y\in \Sigma ^{\infty }: 2\leq l(y) \textrm{ and } y_{1}=c,y_{k}=\infty$ for
all $k\notin \omega _{b}$ with $2\leq k\leq l(y)$\}. Then the mapping
$\Theta _{a,b}^{z}:\Sigma _{b,c}^{\infty }\rightarrow \Sigma _{b,c}^{\infty }$ defined by $\Theta _{a,b}^{z}(x)=x_{\Theta _{a,b}^{z}}$,
where
\begin{equation*}
(x_{\Theta _{a,b}^{z}})_{k}:=\left\{
\begin{array}{ll}
c & \text{\textrm{if }}k=1 \\
\infty & \text{\textrm{if }}k\notin \omega _{b} \textrm{ and } 2\leq k\leq l(x)+1 \\
a\cdot x_{\frac{k}{b}}+z_{k} & \text{\textrm{if }}k\in \omega _{b} \text{\textrm{%
\ and }} \frac{k}{b}\leq l(x)
\end{array}%
\right.
\end{equation*}%
has a unique fixed point $v\in \Sigma _{b,c}^{\infty }$ with $%
l(v)=\infty .$
\end{theorem}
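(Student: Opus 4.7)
The plan is to invoke Matthews' fixed point theorem (Theorem \ref{Matth}) with the restricted Baire partial metric $p_B|_{\Sigma_{b,c}^\infty}$ and the mapping $\Theta_{a,b}^z$. Three ingredients need to be verified: completeness of $(\Sigma_{b,c}^\infty, p_B|_{\Sigma_{b,c}^\infty})$, invariance of $\Sigma_{b,c}^\infty$ under $\Theta_{a,b}^z$, and a contractive estimate with Lipschitz constant strictly less than $1$.

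For completeness I would use Proposition \ref{weight} in both directions: $(\Sigma^\infty, p_B)$ is complete, hence $(\Sigma^\infty, d_{p_B})$ is bicomplete, so it suffices to check that $\Sigma_{b,c}^\infty$ is closed in $(\Sigma^\infty, d_{p_B}^s)$. This is routine since $d_{p_B}^s$-convergence forces initial segments to stabilize and the three defining conditions ($l(y) \geq 2$, $y_1 = c$, and $y_k = \infty$ at non-$\omega_b$ positions between $2$ and $l(y)$) transfer to any limit. The invariance $\Theta_{a,b}^z(\Sigma_{b,c}^\infty) \subseteq \Sigma_{b,c}^\infty$ is then a direct reading of the three-case definition, noting that position $2$ of $\Theta_{a,b}^z(x)$ is always produced (via the second case when $b > 2$ and via the third case when $b = 2$), so $l(\Theta_{a,b}^z(x)) \geq 2$.

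The main obstacle, and the heart of the argument, is the contractive estimate. Set $n = l(x, y)$ for arbitrary $x, y \in \Sigma_{b,c}^\infty$; since $x_1 = y_1 = c$ one has $n \geq 1$, and I claim that $l(\Theta_{a,b}^z(x), \Theta_{a,b}^z(y)) \geq n+1$. A position-by-position verification does the job: position $1$ gives $c$ in both images; every position $k$ with $2 \leq k \leq n+1$ and $k \notin \omega_b$ gives $\infty$ in both; and for every position $k \in \omega_b$ with $2 \leq k \leq n+1$ the crucial inequality $k/b \leq (n+1)/b \leq (n+1)/2 \leq n$ (which relies on $b \geq 2$ and $n \geq 1$) gives $k/b \leq l(x, y)$, hence $x_{k/b} = y_{k/b}$, and the two images share the common value $a \cdot x_{k/b} + z_k$. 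This yields $p_B(\Theta_{a,b}^z(x), \Theta_{a,b}^z(y)) \leq 2^{-(n+1)} = \frac{1}{2} p_B(x, y)$, so the contraction holds with $s = \frac{1}{2}$. Theorem \ref{Matth} then furnishes a unique fixed point $v \in \Sigma_{b,c}^\infty$ of $\Theta_{a,b}^z$ satisfying $p_B(v, v) = 0$, and decoding $2^{-l(v)} = 0$ forces $l(v) = \infty$, as required.
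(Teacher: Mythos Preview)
Your proposal is correct and follows exactly the same strategy as the paper's proof: closedness of $\Sigma_{b,c}^\infty$ in $(\Sigma^\infty,d_{p_B}^s)$ to get completeness via Proposition~\ref{weight}, the $\tfrac12$-contraction for $\Theta_{a,b}^z$ with respect to $p_B$, and an appeal to Theorem~\ref{Matth} to obtain the unique fixed point $v$ with $p_B(v,v)=0$, hence $l(v)=\infty$. The only difference is that you spell out the ``straightforward computation'' the paper leaves implicit---namely the verification that $l\bigl(\Theta_{a,b}^z(x),\Theta_{a,b}^z(y)\bigr)\geq l(x,y)+1$ via the inequality $k/b\leq(n+1)/b\leq n$ for $k\leq n+1$ and $n\geq 1$---and you also check explicitly that the image lands back in $\Sigma_{b,c}^\infty$.
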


\begin{proof} It is easy to check that the set $\Sigma _{b,c}^{\infty }$ is closed in $%
(\Sigma ^{\infty },d_{p_{B}}^{s}).$ Hence $(\Sigma _{b,c}^{\infty },d_{p_{B}}^{s}|_{\Sigma _{b,c}^{\infty }})$ $\ $is
complete. By Proposition \ref{complete} the partial metric space
$(\Sigma _{b,c}^{\infty },p_{B}|_{\Sigma _{b}^{\infty }})$
is complete.

A straightforward computation shows that the mapping $\Theta
_{a,b}^{z}$
satisfies the inequality%
\begin{equation*}
p_{B}|_{\Sigma _{b}^{\infty }}(x_{\Theta _{a,b}^{z}},y_{\Theta _{a,b}^{z}})\leq \frac{1%
}{2}p_{B}|_{\Sigma _{b}^{\infty }}(x,y)
\end{equation*}%
for all $x,y\in \Sigma _{b,c}^{\infty }.$ Then, by Theorem
\ref{Matth}, we deduce that the mapping $\Theta _{a,b}^{z}$ has
unique fixed point $v\in \Sigma _{b,c}^{\infty }$ with
$p_{B}|_{\Sigma _{b,c}^{\infty }}(v,v)=0.$ Hence $l(x)=\infty .$
\end{proof}\medskip

Following the ideas introduced in Section \ref{Int} let us denote by $\mathcal{RT}_{b,c}$ the set
\begin{equation*}
\mathcal{RT}_{b,c}=\{f\in \mathcal{RT}:f(1)=c\text{ and }f(n)=\infty \text{
for all }n\notin \omega _{b}\text{ with }n>1\}.
\end{equation*} Of course if we take $z\in \Sigma ^{\infty }$ with $z_{k}=h(b^{k})$ for all $%
k\in \mathbb{N}$ in Theorem \ref{Mattwords}, where $h$ is
given as in the recurrence equation (\ref{typrec}), then the function $%
f_{v}\in \mathcal{RT}_{b,c}$ defined by $f_{v}(n)=v_{n}$ for all $%
n\in \mathbb{N}$, where $v$ is provided by Theorem \ref{Mattwords}, can be identified with the solution to the Divide and Conquer recurrence equation (\ref{typrec}) and, thus, with the running time of computing of Divide and Conquer
algorithms. However, to make a complete asymptotic complexity
analysis we must give the complexity class of the running time of
computing, i.e. the complexity class of $f_{v}$. Unfortunately the
presented framework is not able to allow us to get this objective.
The reason is given by the fact that two words $x,y\in \Sigma
^{\infty }$ with $l(x)=l(y)=\infty $ satisfy $x\sqsubseteq
y\Leftrightarrow x=y.$ So, according to the above technique, if $f,g$ represent the running time of computing of two algorithms ($f,g\in \mathcal{RT}$), and we
identify in a natural way those functions with the words $x^{f}, x^{g}\in \Sigma ^{\infty }$ ($%
\Sigma =(0,\infty ]$) such that $$x^{f}_{n}=f(n) \text{ and } x^{g}_{n}=g(n) \text{ for all } n\in \mathbb{N},$$ then we have
that
\begin{equation*}
x^{f}\sqsubseteq x^{g}\Leftrightarrow f(n)=g(n)\text{\textrm{\ for all }}%
n\in \mathbb{N} .
\end{equation*}
Therefore we can not obtain an asymptotic upper bound of the running time of
computing of an algorithm under analysis by means of the usual prefix order
defined on $\Sigma ^{\infty }.$ Note that in spite of the preceding disadvantage, the Baire partial metric framework provides the basis to develop a mathematical formalism for the complexity analysis of algorithms which does not rely on the technical ``convergence'' assumption incorporated into the definition of the complexity space $\mathcal{C}$, that is $$f\in \mathcal{C} \Leftrightarrow f\in\mathcal{RT} \text{ and }\sum_{n=1}^{+\infty
}2^{-n}\frac{1}{f(n)}<+\infty.$$ Of course, although every reasonable algorithm must hold the preceding convergence condition, this one is a little artificial and has the unique purpose of guaranteeing the finiteness of the value $d_{C}(f,g)$.

In the next subsection we propose, as a result of the preceding conclusion, a slight modification of the Baire partial metric which will allow to develop a suitable mathematical framework for asymptotic complexity analysis free from the convergence assumption.

\section{The Baire partial quasi-metric space and the asymptotic complexity analysis of
algorithms\label{BPQM}}

In the remainder of the paper we introduce a new metric tool on the set $%
\Sigma ^{\infty }$ of all words over an alphabet $\Sigma $ in such a way that a slight modification
of Theorem \ref{Matth} allows us to model satisfactorily the asymptotic
complexity analysis of algorithms via fixed point techniques in the spirit
of Schellekens.

\subsection{The Baire partial quasi-metric}
For our proposal we recall a few pertinent concepts.

In \cite{Ku2} H.A. K\"{u}nzi, H.A Pajooshesh and Schellekens have
introduced and studied the notion of partial quasi-metric. Roughly
speaking a partial quasi-metric is a partial metric which does not
satisfy the symmetry property. More specifically, a partial
quasi-metric\ on a nonempty set $X$ is a function $q:X\times
X\rightarrow \mathbb{R}^{+}$ such that for all $x,y,z\in X :\medskip
$

$\begin{array}{ll}
(i) & q(x,x)\leq q(x,y); \\
(ii) & q(x,x)\leq q(y,x); \\
(iii) & q(x,y)\leq q(x,z)+q(z,y)-q(z,z); \\
(iv) & q(x,x)=q(x,y)\text{\textrm{\ and }}q(y,y)=q(y,x)\Leftrightarrow x=y.%
\end{array}%
\medskip $

Observe that a partial metric on a set $X$ is a partial quasi-metric
satisfying in addition the condition: $$\begin{array}{ll}
                                         (v) & q(x,y)=q(y,x)\\
                                         \end{array}$$ for all
$x,y\in X.$

A partial quasi-metric space is a pair $(X,q)$ such that $X$ is a nonempty
set and $q$ is a partial quasi-metric on $X.$

Similarly to the case of partial metric spaces a partial quasi-metric $q$
generates a $T_{0}$-topology $\mathcal{T}(q)$ on $X$ which has as a base the
family of open $q$-balls $\{B_{q}(x,\varepsilon ):x\in X,$ $\varepsilon >0\}$%
, where $B_{q}(x,\varepsilon )=\{y\in X:q(x,y)<q(x,x)+\varepsilon \}$ for
all $x\in X$ and $\varepsilon >0.$


If $q$ is a partial quasi-metric on $X$, then the function $d_{q}:X\times X\rightarrow \mathbb{R}^{+}$, defined by
$$d_{q}(x,y)=q(x,y)-q(x,x)$$ for all $x,y\in X$, is
a quasi-metric.

On account of \cite{Ku2}, a partial quasi-metric space $(X,q)$ is
said to be complete provided that the associated quasi-metric space
$(X,d_{q})$ is bicomplete. Moreover, in the same reference the
Matthews fixed point theorem (Theorem \ref{Matth} in Section
\ref{Int}) has been extended to the context of partial quasi-metric
spaces in the following way:

\begin{theorem}
\label{Kunzi}Let $f$ be a mapping from a complete partial quasi-metric space
$(X,q)$ into itself such that there is $s\in \mathbb{R}^{+}$ with $0\leq s<1,
$ satisfying
\begin{equation}
q(f(x),f(y))\leq sq(x,y),  \label{contrBPQM}
\end{equation}%
for all $x,y\in X.$ Then $f$ has a unique fixed point. Moreover if
$x\in X$ is the fixed point of $f,$ then $q(x,x)=0$.
\end{theorem}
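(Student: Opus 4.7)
The plan is to mimic the Picard iteration argument used for the classical Banach and Matthews fixed point theorems, adapted to the asymmetric setting by working with the associated quasi-metric $d_q$ and with its symmetrization $d_q^s$. Fix $x_0\in X$ arbitrarily and put $x_n:=f^n(x_0)$. Applying (\ref{contrBPQM}) with $x=y=x_{n-1}$ and iterating yields $q(x_n,x_n)\leq s^n q(x_0,x_0)$, so in particular $q(x_n,x_n)\to 0$. Applying (\ref{contrBPQM}) to the pairs $(x_{n-1},x_n)$ and $(x_n,x_{n-1})$ and iterating yields $q(x_n,x_{n+1})\leq s^n q(x_0,x_1)$ and $q(x_{n+1},x_n)\leq s^n q(x_1,x_0)$.

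Using $d_q(y,z)=q(y,z)-q(y,y)\leq q(y,z)$ together with the triangle inequality of $d_q$, for $m>n$ one obtains $d_q(x_n,x_m)\leq\sum_{k=n}^{m-1}d_q(x_k,x_{k+1})\leq \frac{s^n}{1-s}q(x_0,x_1)$ and symmetrically $d_q(x_m,x_n)\leq \frac{s^n}{1-s}q(x_1,x_0)$. Hence $(x_n)$ is a Cauchy sequence in the metric space $(X,d_q^s)$. By the completeness assumption on $(X,q)$, i.e.\ the bicompleteness of $(X,d_q)$, there exists $x\in X$ with $d_q(x_n,x)\to 0$ and $d_q(x,x_n)\to 0$. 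Since $q(x_n,x)=d_q(x_n,x)+q(x_n,x_n)$ and both summands vanish, $q(x_n,x)\to 0$; axiom (ii) now gives $q(x,x)\leq q(x_n,x)$ for every $n$, forcing $q(x,x)=0$.

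To check that $x$ is a fixed point, I would apply the contraction in both arguments: $q(x_{n+1},f(x))=q(f(x_n),f(x))\leq s\,q(x_n,x)\to 0$, and $q(f(x),x_{n+1})\leq s\,q(x,x_n)=s\,(d_q(x,x_n)+q(x,x))\to 0$. Combining this with axiom (i) gives $q(f(x),f(x))\leq q(f(x),x_{n+1})\to 0$, so $q(f(x),f(x))=0$, and therefore $d_q^s(x_{n+1},f(x))\to 0$. Since $(x_n)$ also $d_q^s$-converges to $x$ and $d_q^s$ is a metric, uniqueness of limits forces $f(x)=x$. Uniqueness of the fixed point then follows at once: if $y$ is also fixed, then applying (\ref{contrBPQM}) gives $q(y,y)\leq s\,q(y,y)$, $q(x,y)\leq s\,q(x,y)$ and $q(y,x)\leq s\,q(y,x)$, so all three vanish; combined with $q(x,x)=0$, axiom (iv) yields $x=y$.

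The main obstacle is the lack of symmetry, which precludes a single Cauchy estimate and forces one to track $q(x_n,x_{n+1})$ and $q(x_{n+1},x_n)$ separately, to pass to the symmetrization $d_q^s$ in order to invoke bicompleteness, and then to use axioms (i)--(ii) as a bridge between $q$-self-distances and the cross-distances in either direction. Once the correct dictionary between $q$, $d_q$ and $d_q^s$ is set up, the argument reduces to the familiar Picard scheme.
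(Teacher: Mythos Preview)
Your argument is correct: the Picard iteration, carried out in parallel for both directions of $q$ and then transferred to the symmetrized metric $d_q^s$ in order to invoke bicompleteness, yields the fixed point, and your use of axioms (i)--(ii) to squeeze the self-distances down to zero is exactly the right bridge in the asymmetric setting. The uniqueness step via axiom (iv) is also clean.

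There is nothing to compare against, however: the paper does not supply its own proof of this theorem. Theorem~\ref{Kunzi} is quoted verbatim from the reference by K\"unzi, Pajoohesh and Schellekens (\cite{Ku2}), and is used only as a tool in the subsequent application to the Baire partial quasi-metric. So your write-up is a self-contained proof of a result that the paper merely cites.
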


The below result will be crucial in order to construct a partial
quasi-metric framework for the asymptotic complexity analysis.

\begin{proposition}\label{asymp}
\label{R}Under the conditions of Theorem \ref{Kunzi}, if there
exists $y\in X$ such that $d_{q}(f(y),y)=0$ then $d_{q}(x,y)=0.$
\end{proposition}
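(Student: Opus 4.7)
The plan is to exploit the partial-quasi-metric triangle inequality (axiom (iii)) with the intermediate point $f(y)$, combined with the hypothesis $d_{q}(f(y),y)=0$ and the contractive hypothesis on $f$. The key observation is that the quantity $q(f(y),y)-q(f(y),f(y))$ appearing in the triangle inequality is precisely $d_{q}(f(y),y)$, which vanishes by assumption; this will collapse the triangle bound into a self-referential inequality in $q(x,y)$.

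First I would note that Theorem \ref{Kunzi} guarantees the existence of the unique fixed point $x$ and, moreover, that $q(x,x)=0$. Hence proving $d_{q}(x,y)=0$ amounts to proving $q(x,y)=0$. I would then apply axiom (iii) of the partial quasi-metric to the triple $(x,f(y),y)$:
\begin{equation*}
q(x,y)\leq q(x,f(y))+q(f(y),y)-q(f(y),f(y))= q(x,f(y))+d_{q}(f(y),y).
\end{equation*}
By hypothesis the last term is zero, and since $x$ is a fixed point, $q(x,f(y))=q(f(x),f(y))$. The contractive inequality (\ref{contrBPQM}) then yields
\begin{equation*}
q(x,y)\leq q(f(x),f(y))\leq s\,q(x,y).
\end{equation*}

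Since $q$ takes values in $\mathbb{R}^{+}$ (hence finite), the inequality $q(x,y)\leq s\, q(x,y)$ with $0\leq s<1$ forces $(1-s)q(x,y)\leq 0$, so $q(x,y)=0$. Combined with $q(x,x)=0$ this gives $d_{q}(x,y)=q(x,y)-q(x,x)=0$, as desired. The only subtle point worth flagging is the use of finiteness of $q(x,y)$ to justify the final cancellation; this is guaranteed by the codomain $\mathbb{R}^{+}$ in the definition of partial quasi-metric, so no additional hypothesis is required. The rest of the argument is a direct three-line calculation, and I do not anticipate any serious obstacle.
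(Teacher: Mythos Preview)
Your argument is correct and is essentially the same as the paper's: both hinge on applying the partial-quasi-metric triangle inequality with intermediate point $f(y)$, using $d_{q}(f(y),y)=0$ to eliminate the extra terms, and then invoking $x=f(x)$ together with the contraction inequality to obtain $q(x,y)\leq s\,q(x,y)$. The paper frames the final step as a contradiction (assuming $q(x,y)>0$ and deducing $1\leq s$) and inserts a redundant preliminary triangle step through $f(x)=x$, while you argue directly; these are cosmetic differences only.
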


\begin{proof}Suppose for the purpose of contradiction that $d_{q}(x,y)\neq 0,$ where $x$ is the fixed point of $f$. Then $%
q(x,y)>0.$ Consequently the inequality (\ref{contrBPQM}) yields%
\begin{eqnarray*}
q(x,y) &\leq &q(x,f(x))+q(f(x),y)-q(f(x),f(x)) \\
&=& q(f(x),y)\\
&\leq &q(f(x),f(y))+q(f(y),y)-q(f(y),f(y)) \\
&=&q(f(x),f(y))+d_{q}(f(y),y) \\
&=&q(f(x),f(y))\leq sq(x,y).
\end{eqnarray*}%
It follows that $1\leq s<1$, which is a contradiction.
\end{proof}\medskip

Now we are able to construct a new metric tool on the set $\Sigma ^{\infty }.
$ Indeed, let $\Sigma $ be a nonempty alphabet endowed with an order $%
\preceq $. Then, given $x,y\in \Sigma ^{\infty },$ we will say that
$x$ is a subprefix of $y$, denoted by $x\sqsubseteq_{sp}y$, provided that there exists $n_{0}\in
\mathbb{N}$ with $n_{0}\leq l(x)$ such that $x_{k}\preceq y_{k}$ for
all $k\leq n_{0}.$ Obviously if $x\sqsubseteq y$, then $x\sqsubseteq_{sp}y$.

Note that, contrary to the case of the prefix, the subprefix notion does not induce an order relation on $\Sigma ^{\infty }$.

Let us denote by $\emph{l}_{\preceq }(x,y)=\sup \{n\in \mathbb{N} :x_{k}\preceq
y_{k}$ for all $k\leq n\}$ whenever $x\sqsubseteq_{sp}y$, and $\emph{l%
}_{\preceq }(x,y)=0$ otherwise. Note that
$$\emph{l}_{\preceq }(x,y)=\infty
\Leftrightarrow l(x)=l(y)=\infty \text{ and } x_{k}\preceq y_{k}
\text{ for all } k\in
\mathbb{N}.$$ Clearly $\emph{l}_{\preceq }(x,y)=\emph{l}(x,y)$ whenever $%
x\sqsubseteq y$, and $\emph{l}_{\preceq }(x,x)=l(x)$ for all $x\in
\Sigma ^{\infty }.$

In the light of the preceding definitions we have the following result.

\begin{proposition}\label{pqmB}
Let $\Sigma $ be an alphabet endowed with an order $\preceq $. Then
the pair $(\Sigma ^{\infty },q_{B})$ is a complete partial
quasi-metric space, where $q_{B}:\Sigma ^{\infty }\times \Sigma
^{\infty }\rightarrow \mathbb{R}^{+}$ is defined by
$q_{B}(x,y)=2^{-\emph{l}_{\preceq }(x,y)}$ for all $x,y\in \Sigma
^{\infty }$.
\end{proposition}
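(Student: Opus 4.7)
My plan is to verify the four axioms (i)--(iv) of a partial quasi-metric for $q_B$ directly from the definition of $l_\preceq$, and then to establish bicompleteness of the induced quasi-metric $d_{q_B}$ by a case analysis on $d_{q_B}^s$-Cauchy sequences.

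For axioms (i) and (ii) I would observe that $l_\preceq(x,y)$ and $l_\preceq(y,x)$ are both bounded above by $l(x)$: any index $k$ contributing to the $\sup$ must satisfy $k \leq l(x)$ so that $x_k$ exists, while the case where the corresponding subprefix relation fails yields $l_\preceq = 0 \leq l(x)$ trivially. The triangle-type inequality (iii) I would handle by setting $n = l_\preceq(x,z)$, $m = l_\preceq(z,y)$ and $r = l(z)$; the same bound gives $n, m \leq r$, so $\max(2^{-n}, 2^{-m}) \leq 2^{-n} + 2^{-m} - 2^{-r}$, while transitivity of $\preceq$ yields $l_\preceq(x,y) \geq \min(n,m)$, hence $2^{-l_\preceq(x,y)} \leq \max(2^{-n}, 2^{-m})$. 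Condition (iv) combines $l(x) = l_\preceq(x,y)$ with $l(y) = l_\preceq(y,x)$: since $l_\preceq$ is bounded above by both arguments' lengths, the two equalities force $l(x) = l(y)$, and then componentwise $\preceq$-comparability in both directions combined with antisymmetry of $\preceq$ gives $x = y$.

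The main part is showing that $(\Sigma^\infty, d_{q_B})$ is bicomplete, where $d_{q_B}(x,y) = 2^{-l_\preceq(x,y)} - 2^{-l(x)}$. The critical observation is an \emph{integer-jump property}: whenever $l(x) = L < \infty$, if $l_\preceq(x,y) < L$ then $l_\preceq(x,y) \leq L - 1$, forcing $d_{q_B}(x,y) \geq 2^{-(L-1)} - 2^{-L} = 2^{-L}$, so $d_{q_B}(x,y) < 2^{-L}$ entails $l_\preceq(x,y) = L$. Given a $d_{q_B}^s$-Cauchy sequence $(x^{(n)})$, I would split according to $\alpha := \liminf_n l(x^{(n)})$. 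If $\alpha < \infty$, I pick a subsequence realizing $\alpha$ and apply the integer-jump property to both $d_{q_B}(x^{(n_k)}, x^{(m)})$ and $d_{q_B}(x^{(m)}, x^{(n_k)})$ with $\varepsilon < 2^{-\alpha}/2$; the two resulting conditions together with antisymmetry of $\preceq$ force $l(x^{(m)}) = \alpha$ and $x^{(m)} = x^{(n_k)}$ for all large $m$, so the sequence is eventually constant and the limit is immediate. If $\alpha = \infty$, then $2^{-l(x^{(n)})} \to 0$ and the Cauchy condition forces both $l_\preceq(x^{(n)}, x^{(m)})$ and $l_\preceq(x^{(m)}, x^{(n)})$ to tend to $\infty$; antisymmetry then gives $x^{(n)}_k = x^{(m)}_k$ for $k \leq K(n,m)$ with $K(n,m) \to \infty$, so I can define $x_k$ as the eventual common value, producing $x \in \Sigma^\infty$ with $l(x) = \infty$, and $d_{q_B}^s(x^{(n)}, x) \to 0$ follows directly from the construction.

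The routine axioms aside, I expect the subtle step to be the $\alpha < \infty$ branch of the completeness argument: one must spell out carefully that the existence of a single short element in the Cauchy tail, combined with the integer-jump phenomenon and antisymmetry of $\preceq$, compels every subsequent element to share its length and to coincide with it pointwise. Once the integer-jump property is stated cleanly, the remainder of the case analysis is essentially bookkeeping.
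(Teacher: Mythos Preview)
Your verification of axioms (i)--(iv) is correct and essentially matches the paper's argument; your treatment of (iii) via $l_\preceq(x,y)\geq\min(n,m)$ and $\min(2^{-n},2^{-m})\geq 2^{-r}$ is in fact slightly cleaner than the paper's phrasing.

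For completeness, however, you take a genuinely different route. The paper does \emph{not} argue directly: instead it exploits the identity $l(x,y)=\min\bigl(l_\preceq(x,y),\,l_\preceq(y,x)\bigr)$ (a consequence of antisymmetry of $\preceq$) together with the inequality $l_\preceq(x,y)\geq l(x,y)$ to sandwich $d_{q_B}^s$ between multiples of $d_{p_B}^s$. A $d_{q_B}^s$-Cauchy sequence is thereby shown to be $d_{p_B}^s$-Cauchy, and the already-known completeness of the Baire partial metric space $(\Sigma^\infty,p_B)$ supplies the limit; the same comparison inequalities then transfer convergence back to $d_{q_B}^s$. Your approach, by contrast, is self-contained: you split on $\alpha=\liminf_n l(x^{(n)})$, use the integer-jump observation to force eventual constancy when $\alpha<\infty$, and build the limit coordinatewise when $\alpha=\infty$. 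Both arguments are valid. The paper's reduction is shorter and reuses established machinery, while yours avoids any dependence on $(\Sigma^\infty,p_B)$ and yields the additional structural fact that a Cauchy sequence containing words of bounded length must stabilise.
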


\begin{proof}
Since $\emph{l}_{\preceq }(x,y)\leq l(x)$ and $\emph{l}_{\preceq }(y,x)\leq
l(x)$ for all $x,y\in \Sigma ^{\infty }$ we deduce immediately that $%
q_{B}(x,x)\leq q_{B}(x,y)$ and that $q_{B}(x,x)\leq q_{B}(y,x)$ for all $%
x,y\in \Sigma ^{\infty }.$

Next consider $x,y\in \Sigma ^{\infty }$ such that
$q_{B}(x,x)=q_{B}(x,y)$
and $q_{B}(y,y)=q_{B}(y,x).$ Then $l(x)=\emph{l}_{\preceq }(x,y)$ and $l(y)=%
\emph{l}_{\preceq }(y,x).$ It follows that $l(x)=l(y)$ and that $x=y.$

Now we show that $$q_{B}(x,y)\leq q_{B}(x,z)+q_{B}(z,y)-q_{B}(z,z)$$ for all $%
x,y,z\in \Sigma ^{\infty }.$

We assume that $\emph{l}_{\preceq }(x,z)>0$ and $\emph{l}_{\preceq }(z,y)>0,$
because otherwise it is clear that the preceding inequality holds. It
follows that $\emph{l}_{\preceq }(x,y)>0.$ Then it suffices to consider that
$\emph{l}_{\preceq }(x,y)\leq \min \left(\emph{l}_{\preceq }(x,z),\emph{l}%
_{\preceq }(z,y)\right).$ But this clearly forces that
$\emph{l}_{\preceq }(x,y)=\min \left(\emph{l}_{\preceq
}(x,z),\emph{l}_{\preceq }(z,y)\right).$
Consequently%
\begin{eqnarray*}
q_{B}(x,y) &=&2^{-\min \left(\emph{l}_{\preceq
}(x,z),\emph{l}_{\preceq }(z,y)\right)}\leq 2^{-\emph{l}_{\preceq
}(x,z)}+2^{-\emph{l}_{\preceq}(z,y)}-q_{B}(z,z)
\\
&=&q_{B}(x,z)+q_{B}(z,y)-q_{B}(z,z).
\end{eqnarray*}

Therefore we have shown that $q_{B}$ is a partial quasi-metric on $\Sigma
^{\infty }.$

Finally we show that the partial quasi-metric space $(\Sigma ^{\infty },q_{B})$
is complete. Indeed, let $(x_{n})_{n\in \mathbb{N}}$ be a Cauchy sequence in
$(\Sigma ^{\infty },d_{q_{B}}^{s}).$ Thus, given $\varepsilon >0,$ there
exists $n_{0}\in \mathbb{N}$ such that
\begin{equation*}
\max \left(\emph{2}^{-\emph{l}_{\preceq }(x_{n},x_{m})}-2^{-l(x_{n})},\emph{2}^{-%
\emph{l}_{\preceq }(x_{m},x_{n})}-2^{-l(x_{m})}\right)<\varepsilon
\end{equation*}%
for all $n,m\geq n_{0}.$

Since $(x_{n})_{k}=(x_{m})_{k}$ for all $k\leq \min
\left(\emph{l}_{\preceq }(x_{n},x_{m}),\emph{l}_{\preceq
}(x_{m},x_{n})\right)$ we have that
\begin{equation*}
p_{B}(x_{n},x_{m})=2^{-\min \left(\emph{l}_{\preceq }(x_{n},x_{m}),\emph{l}%
_{\preceq }(x_{m},x_{n})\right)}.
\end{equation*}
It follows that
\begin{equation*}
d_{p_{B}}^{s}(x_{n},x_{m})\leq 2\cdot 2^{-\min
\left(\emph{l}_{\preceq }(x_{n},x_{m}),\emph{l}_{\preceq
}(x_{m},x_{n})\right)}-2^{-l(x_{n})}-2^{-l(x_{m})}<3\varepsilon
\end{equation*}%
for all $m,n\geq n_{0}.$ Whence $\lim_{n,m\rightarrow \infty
}d_{p_{B}}^{s}(x_{n},x_{m})=0.$ Thus $(x_{n})_{n\in \mathbb{N}}$ is
a Cauchy sequence in $(\Sigma ^{\infty },d_{p_{B}}^{s}).$ Since
$(\Sigma ^{\infty },p_{B})$ is a complete partial metric space (see
Section \ref{BPM}) we have that there exists $x\in \Sigma ^{\infty }$ satisfying that
$$\lim_{n,m\rightarrow \infty }p_{B}(x_{n},x_{m})=p_{B}(x,x)=\lim_{n\rightarrow
\infty }p_{B}(x,x_{n}).$$ Whence we obtain that $\lim_{n\rightarrow
\infty }d_{p_{B}}(x,x_{n})=\lim_{n\rightarrow \infty
}d_{p_{B}}(x_{n},x)=0.$ Thus $$\lim_{n\rightarrow \infty
}2^{-\emph{l}_{\preceq }(x,x_{n})}=2^{-l(x)},$$ since
\begin{equation*}
0\leq 2^{-\emph{l}_{\preceq }(x,x_{n})}-2^{-l(x)}\leq 2^{-\emph{l}%
(x,x_{n})}-2^{-l(x)}<\varepsilon
\end{equation*}%
for all $n\in \mathbb{N}$ such that $n\geq n_{0}$.

Similar considerations apply to show that $\lim_{n\rightarrow \infty
}2^{-\emph{l}_{\preceq }(x_{n},x)}=2^{-l(x)}$ from the fact that
$\lim_{n\rightarrow \infty }d_{p_{B}}(x_{n},x)=0$. Hence
$\lim_{n\rightarrow \infty }d_{q_{B}}^{s}(x,x_{n})=0.$ Consequently
$(\Sigma ^{\infty },d_{q_{B}}^{s})$ is a bicomplete quasi-metric
space. Therefore $(\Sigma ^{\infty },q_{B})$ is a complete partial
quasi-metric space. The proof is complete.
\end{proof}\medskip

From now on the pair $(\Sigma ^{\infty },q_{B})$ will be called the
Baire partial quasi-metric space.

Notice that $$q_{B}(x,y)=0\Leftrightarrow l(x)=l(y)=\infty \text{
and } x_{k}\preceq y_{k} \text{ for all } k\in \mathbb{N} .$$ So the
Baire partial quasi-metric encodes, similarly to the case of the complexity quasi-metric $%
d_{\mathcal{C}}$ and the Baire partial metric $p_{B}$, the order
$\preceq $ on the set $\Sigma $ and the associated subprefix notion. Furthermore, we wish to emphasize that the Baire partial quasi-metric space remains valid to model all those processes which have been modeled by means of the Baire partial metric space (as, for instance, in program verification or in denotational semantics for programming languages). Nevertheless, the use of he Baire partial quasi-metric presents an advantage with respect to use of the partial metric one, and this advantage is given by its utility, contrarily to the Baire partial metric space, in complexity analysis.

\subsection{The asymptotic complexity analysis of algorithms via the Baire partial quasi-metric: Three examples}
We end the paper showing that the developed partial quasi-metric theory is useful to analyze the
asymptotic complexity of algorithms. Furthermore, we validate our results retrieving as a particular case the well-known asymptotic complexity of Mergesort, Quicksort and Largetwo. To this end, let us recall that, as set out in Section \ref{Int}, when discussing the running time of computing of Divide and Conquer algorithms usually one has to solve recurrence equations of the form
\begin{equation}
T(n)=\left\{
\begin{array}{ll}
c & \text{\textrm{if }}n=1 \\
aT(\frac{n}{b})+h(n) & \text{\textrm{if }}n\in \omega_{b}%
\end{array}%
\right. ,  \label{eqDC}
\end{equation}%
where $a,b\in \mathbb{N}$ with $a,b>1$, $c\in \mathbb{R^{+}}$ with $c>0$ and $h\in\mathcal{RT}$ such that $0<h(n)<\infty $ for all $n\in \mathbb{N}$.

The asymptotic complexity of those algorithms whose running time of computing is given by the solution of a recurrence equation (\ref{eqDC}) can be analyzed via the next result which is a Baire partial quasi-metric space version of Theorem \ref{Schell} in Section \ref{Int}.

\begin{theorem}\label{UsefulTh1}
Let $\Sigma =(0,\infty ]$ and $a,b\in \mathbb{N}$ with $a,b>1.$
Fix $z\in \Sigma ^{\infty }$ with $l(z)=\infty $ and $%
z_{k}\neq \infty $ for all $k\in \omega _{b}$ and $k\geq 2$. Let $\Sigma
_{b,c}^{\infty }$ be the subset of $\Sigma ^{\infty }$ given by
$\Sigma_{b,c}^{\infty }:=\{y\in \Sigma ^{\infty }: 2\leq l(y) \textrm{ and } y_{1}=c,y_{k}=\infty \text{ for
all } k\notin \omega _{b} \text{ with } 2\leq k\leq l(y)\}$. Then the mapping
$\Theta _{a,b}^{z}:\Sigma _{\ast ,b}^{\infty }\rightarrow \Sigma _{b,c}^{\infty }$ defined by $\Theta _{a,b}^{z}(x)=x_{\Theta _{a,b}^{z}}$,
where
\begin{equation*}
(x_{\Theta _{a,b}^{z}})_{k}:=\left\{
\begin{array}{ll}
c & \text{\textrm{if }}k=1 \\
\infty & \text{\textrm{if }}k\notin \omega _{b} \textrm{ and } 2\leq k\leq l(x)+1 \\
a\cdot x_{\frac{k}{b}}+z_{k} & \text{\textrm{if }}k\in \omega _{b} \text{\textrm{%
\ and }} \frac{k}{b}\leq l(x)
\end{array}%
\right.
\end{equation*}%
has a unique fixed point $v\in \Sigma _{b,c}^{\infty }$ with $%
l(v)=\infty .$ Moreover if $u\in \Sigma _{b,c}^{\infty }$ such that $\Theta _{a,b}^{z}(u)\sqsubseteq_{sp}u$ then $v\sqsubseteq_{sp}u$.
\end{theorem}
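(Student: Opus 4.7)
The plan is to adapt the strategy of Theorem \ref{Mattwords}: first establish a unique fixed point by applying the partial quasi-metric Matthews-type fixed point theorem (Theorem \ref{Kunzi}) in the restricted space $(\Sigma_{b,c}^\infty, q_B|_{\Sigma_{b,c}^\infty})$, and then derive the subprefix bound from Proposition \ref{asymp}. The central calculation is essentially the same contraction estimate as in Theorem \ref{Mattwords}, now carried out against the partial quasi-metric $q_B$ instead of the partial metric $p_B$.

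First I would verify that $(\Sigma_{b,c}^\infty, q_B|_{\Sigma_{b,c}^\infty})$ is a complete partial quasi-metric space. Since Proposition \ref{pqmB} gives completeness of the ambient $(\Sigma^\infty, q_B)$, it suffices to observe that $\Sigma_{b,c}^\infty$ is closed in $(\Sigma^\infty, d_{q_B}^s)$; this follows directly from the coordinate-wise definition of the set. The main technical step is then the contraction
\begin{equation*}
q_B\bigl(\Theta_{a,b}^z(x),\Theta_{a,b}^z(y)\bigr) \leq \tfrac{1}{2}\, q_B(x,y)
\end{equation*}
for all $x,y\in\Sigma_{b,c}^\infty$. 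Setting $n:=l_\preceq(x,y)\geq 1$ (using $x_1=c=y_1$), I would show $l_\preceq(\Theta_{a,b}^z(x),\Theta_{a,b}^z(y))\geq n+1$ by a position-by-position inspection for $1\leq k\leq n+1$: at $k=1$ both words equal $c$; at any $k\notin\omega_b$ with $2\leq k\leq n+1$ both equal $\infty$ (using $l(x),l(y)\geq n$); at any $k\in\omega_b$ with $2\leq k\leq n+1$ the values $a\cdot x_{k/b}+z_k$ and $a\cdot y_{k/b}+z_k$ arise, and since $k/b\leq (n+1)/b\leq n$ one has $x_{k/b}\preceq y_{k/b}$, so the resulting real numbers remain $\preceq$-comparable. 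This elementary but somewhat fiddly case analysis is the main obstacle. Once it is in place, Theorem \ref{Kunzi} provides a unique fixed point $v\in\Sigma_{b,c}^\infty$ with $q_B(v,v)=0$, and since $q_B(v,v)=2^{-l(v)}$, this forces $l(v)=\infty$.

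For the second assertion, I would translate the hypothesis $\Theta_{a,b}^z(u)\sqsubseteq_{sp} u$ into the identity $l_\preceq(\Theta_{a,b}^z(u),u)=l(\Theta_{a,b}^z(u))$, which gives
\begin{equation*}
d_{q_B}\bigl(\Theta_{a,b}^z(u),u\bigr) = q_B\bigl(\Theta_{a,b}^z(u),u\bigr) - q_B\bigl(\Theta_{a,b}^z(u),\Theta_{a,b}^z(u)\bigr) = 0.
\end{equation*}
Applying Proposition \ref{asymp} with $f=\Theta_{a,b}^z$ and the auxiliary point $u$ then yields $d_{q_B}(v,u)=0$, and combining with $q_B(v,v)=0$ gives $q_B(v,u)=0$. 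This amounts to $l_\preceq(v,u)=\infty$, so $v_k\preceq u_k$ for every $k\in\mathbb{N}$, which is exactly $v\sqsubseteq_{sp} u$.
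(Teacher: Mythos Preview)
Your proof follows essentially the same route as the paper's: closedness of $\Sigma_{b,c}^\infty$ in $(\Sigma^\infty,d_{q_B}^s)$ to obtain completeness, the contraction estimate with $s=\tfrac12$ to invoke Theorem~\ref{Kunzi}, and then Proposition~\ref{asymp} to deduce $d_{q_B}(v,u)=0$ and hence $v\sqsubseteq_{sp}u$. The only difference is that you spell out the position-by-position verification of $l_\preceq(\Theta_{a,b}^z(x),\Theta_{a,b}^z(y))\geq l_\preceq(x,y)+1$, which the paper leaves as a one-line assertion; your reading of $\Theta_{a,b}^z(u)\sqsubseteq_{sp}u$ as $d_{q_B}(\Theta_{a,b}^z(u),u)=0$ is exactly how the paper uses the hypothesis as well.
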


\begin{proof} First of all we note that the subset  $\Sigma _{b,c}^{\infty }$ is closed in $(\Sigma ^{\infty },d_{q_{B}}^{s})$
and, thus, the pair $(\Sigma _{b,c}^{\infty },q_{B}|_{\Sigma _{b,c}^{\infty }})$ is a complete partial quasi-metric space. Moreover, the
mapping $\Theta _{a,b}^{z}:\Sigma _{b,c}^{\infty }\rightarrow \Sigma
_{b,c}^{\infty }$ holds the inequality (\ref{contrBPQM}) in Theorem \ref{Kunzi} with $s=\frac{1}{2}.$ So the
aforesaid theorem guarantees that $\Theta _{a,b}^{z}$ has a unique fixed
point $v\in \Sigma _{b,c}^{\infty }$ with $q_{B}(v,v)=0.$ Hence $l(v)=\infty .$

Now assume the existence of $u\in \Sigma _{b,c}^{\infty }$ such that $\Theta _{a,b}^{z}(u)\sqsubseteq_{sp}u$. It follows, by construction of $\Theta_{a,b}^{z}$, that $l(u)=\infty$. Furthermore, it is clear that $d_{q_{B}}(\Theta _{a,b}^{z}(u),u)=0.$ Hence, by Proposition \ref{asymp}, we have that $d_{q_{B}}(v,u)=0.$ Whence we deduce that $v\sqsubseteq_{sp}u$.\end{proof}$\medskip$

Similarly to Section \ref{Int}, we denote by  $\Phi_{T}$ the mapping $\Phi_{T}:\mathcal{RT}_{b,c}\rightarrow \mathcal{RT}_{b,c}$ given by
\begin{equation*}
\Phi _{T}(f)(n)=\left\{
\begin{array}{ll}
c & \text{\textrm{if }}n=1 \\
\infty  & \text{\textrm{if }}n\notin \omega _{b} \textrm{ and } n>1\\
af(\frac{n}{b})+h(n) & \text{\textrm{otherwise}}%
\end{array}%
\right. .
\end{equation*}

\begin{corollary}\label{CorUseful1}A Divide and Conquer recurrence of the form (\ref{eqDC}) has
a unique solution $f_{T}\in \mathcal{RT}_{b,c}$. Moreover if there exists $g\in \mathcal{RT}_{b,c}$ such that $\Phi_{T}$ is an improver with respect to $g$, then $f_{T}\in \mathcal{O}(g)$.
\end{corollary}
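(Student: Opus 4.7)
The plan is to reduce Corollary \ref{CorUseful1} to Theorem \ref{UsefulTh1} through an explicit encoding of running-time functions as words over $\Sigma=(0,\infty]$. I would endow $\Sigma$ with its usual order $\leq$ (playing the role of $\preceq$) and fix some $z\in\Sigma^{\infty}$ with $l(z)=\infty$ and $z_{k}=h(k)$ whenever $k\in\omega_{b}$ with $k\geq 2$; the values at the remaining positions are irrelevant, since they never enter the definition of $\Theta_{a,b}^{z}$ on $\Sigma_{b,c}^{\infty}$, so one may pick any positive finite numbers there. The assignment $f\mapsto x^{f}$ with $x^{f}_{n}=f(n)$ is then a bijection from $\mathcal{RT}_{b,c}$ onto the infinite-length members of $\Sigma_{b,c}^{\infty}$, and a term-by-term comparison of the defining clauses of $\Phi_{T}$ and $\Theta_{a,b}^{z}$ delivers the conjugacy $x^{\Phi_{T}(f)}=\Theta_{a,b}^{z}(x^{f})$ for every $f\in\mathcal{RT}_{b,c}$.

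Under this dictionary, solutions of (\ref{eqDC}) in $\mathcal{RT}_{b,c}$ match precisely the infinite-length fixed points of $\Theta_{a,b}^{z}$ in $\Sigma_{b,c}^{\infty}$. Theorem \ref{UsefulTh1} then supplies a unique such fixed point $v$, whose image under the inverse bijection is the unique solution $f_{T}\in\mathcal{RT}_{b,c}$. For the asymptotic clause, the improver condition evaluated at $n=0$ reads $\Phi_{T}(g)(n)\leq g(n)$ for every $n\in\mathbb{N}$. Writing $u=x^{g}$, this transports to $(\Theta_{a,b}^{z}(u))_{k}\leq u_{k}$ for every $k$; since both $l(u)=\infty$ and $l(\Theta_{a,b}^{z}(u))=\infty$, it follows that $l_{\preceq}(\Theta_{a,b}^{z}(u),u)=\infty$, and in particular $\Theta_{a,b}^{z}(u)\sqsubseteq_{sp}u$, which is the hypothesis required to invoke the second part of Theorem \ref{UsefulTh1}.

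The remaining step is where I expect the only real subtlety to lie: extracting an asymptotic inequality from the conclusion $v\sqsubseteq_{sp}u$, since on the face of it that statement is only a finite-initial-segment relation. My plan is to exploit what is actually established inside the proof of Theorem \ref{UsefulTh1} via Proposition \ref{asymp}, namely the stronger identity $d_{q_{B}}(v,u)=0$. Combined with $q_{B}(v,v)=2^{-l(v)}=0$, this forces $q_{B}(v,u)=0$, equivalently $l_{\preceq}(v,u)=\infty$, so $v_{k}\leq u_{k}$ for every $k\in\mathbb{N}$. Pulling this back through the bijection yields $f_{T}(n)\leq g(n)$ for all $n\in\mathbb{N}$, which is strictly stronger than, and immediately implies, $f_{T}\in\mathcal{O}(g)$ (with multiplicative constant $1$ and threshold $n_{0}=1$).
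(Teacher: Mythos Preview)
Your proof is correct and follows essentially the same route as the paper's: encode running-time functions as infinite words over $\Sigma=(0,\infty]$, identify $\Phi_{T}$ with $\Theta_{a,b}^{z}$ under this encoding, and then invoke Theorem~\ref{UsefulTh1}. Your treatment of the final step is in fact more careful than the paper's own proof, which simply asserts that $v\sqsubseteq_{sp}y^{g}$ gives $f_{v}\in\mathcal{O}(g)$; you correctly observe that the bare subprefix relation is only an initial-segment statement and explicitly recover the full inequality $v_{k}\leq u_{k}$ for all $k$ from the stronger conclusion $d_{q_{B}}(v,u)=0$ established via Proposition~\ref{asymp} inside the proof of Theorem~\ref{UsefulTh1}.
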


\begin{proof}Let $v\in \Sigma_{b,c}^{\infty }$ be the fixed point of the mapping
$\Theta _{a,b}^{z}$ ensured by Theorem \ref{UsefulTh1}. Define $f_{v}\in \mathcal{RT}$
following the same arguments as in Section \ref{BPM}. We immediately obtain that $f_{v}\in \mathcal{RT}_{b,c}$ is the unique solution $f_{T}$ to the
recurrence equation (\ref{eqDC}). So $f_{v}$ can be
identified with the running time of computing of a Divide and Conquer
algorithm. In addition if $\Phi_{T}$ is an improver with respect to $g\in \mathcal{RT}_{b,c}$, then we can identify such a
complexity function with a word $y^{g}\in \Sigma _{b,c}^{\infty }$,
defined by $y_{k}^{g}=g(k)$ for all $k\in \omega _{b},$ such that $\Theta _{a,b}^{z}(y^{g})\sqsubseteq_{sp}y^{g}$. It follows, by Theorem \ref{UsefulTh1}, that $v\sqsubseteq_{sp}y^{g}$, that is
$f_{v}\in \mathcal{O}(f_{y^{g}}).$ Since $%
f_{y^{g}}=g$ we have obtained that $f_{v}\in \mathcal{O}(g).$ \end{proof} \medskip

Typical examples of algorithms whose running time of computing is the solution to a recurrence equation of kind (\ref{eqDC}) are Mergesort and   (in the best case behaviour). \bigskip

In the case of Mergesort the recurrence equation (\ref{eqDC}) in the worst case behaviour is the following:

\begin{equation}
T(n)=\left\{
\begin{array}{ll}
c & \text{\textrm{if }}n=1 \\
2T(\frac{n}{2})+n-1 & \text{\textrm{if }}n\in \omega _{2}
\end{array}%
\right. ,  \label{rec4}
\end{equation}%
and in the best and the average case behaviour is exactly the next one:%
\begin{equation}
T(n)=\left\{
\begin{array}{ll}
c & \text{\textrm{if }}n=1 \\
2T(\frac{n}{2})+\frac{n}{2} & \text{\textrm{if }}n\in \omega _{2}%
\end{array}%
\right. .  \label{rec5}
\end{equation}%
\bigskip

When Quicksort is considered the recurrence equation (\ref{eqDC}) associated to the running time of
computing in the best case behaviour is exactly the following one:%
\begin{equation}
T(n)=\left\{
\begin{array}{ll}
c & \text{\textrm{if }}n=1 \\
2T(\frac{n}{2})+dn & \text{\textrm{if }}n\in \omega _{2}
\end{array}%
\right. , \label{rec3}
\end{equation}%
where $d\in \mathbb{R^{+}}$ wiht $d>0$.\medskip

As an immediate consequence of Theorem \ref{UsefulTh1} and Corollary \ref{CorUseful1} we obtain the following well-known results which ratify, in part, the proposed theory.

\begin{corollary}Let $r\in \mathbb{R^{+}}$ with $r>0$. Define the mapping $g^{r}_{log_{2}}\in \mathcal{RT}_{b,c}$ by

\begin{equation*}
g_{\log _{2}}^{r}(n)=\left\{
\begin{array}{ll}
c & n=1 \\
\infty & n\notin \omega _{2} \textrm{ and } n>1 \\
rn\log _{2}n & \text{\textrm{otherwise}}%
\end{array}%
\right. .
\end{equation*}

Then the running time of computing of
\begin{itemize}
\item[1)] Mergesort in the worst case behaviour is in the complexity class $\mathcal{O}(g^{1}_{log_{2}})$.
\item[2)] Mergesort in the best and the average case bahviour is in the complexity class $\mathcal{O}(g^{\frac{1}{2}}_{log_{2}})$.
\item[3)] Quicksort in the best case bahaviour is in the complexity class $\mathcal{O}(g^{d}_{log_{2}})$.
\end{itemize}
\end{corollary}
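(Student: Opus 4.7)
The plan is to reduce the statement to Corollary \ref{CorUseful1} by producing, for each of the three recurrences (\ref{rec4}), (\ref{rec5}), (\ref{rec3}), an explicit function $g \in \mathcal{RT}_{b,c}$ (with $b=2$) such that the associated functional $\Phi_T$ is an improver with respect to $g$. Since the corollary then supplies $f_T \in \mathcal{O}(g)$, choosing $g = g^{1}_{\log_2}$ for the worst case of Mergesort, $g = g^{1/2}_{\log_2}$ for the best/average case of Mergesort, and $g = g^{d}_{\log_2}$ for the best case of Quicksort will give the three claims simultaneously.

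First I would record that the functional $\Phi_T$ attached to each recurrence is monotone with respect to the pointwise order on $\mathcal{RT}_{b,c}$, because it is assembled by pre-composition with $n \mapsto n/b$, multiplication by the positive constant $a$, and addition of the positive function $h$. In view of the remark in Section \ref{Int}, to verify that $\Phi_T$ is an improver with respect to a given $g^r_{\log_2}$ it then suffices to check the single pointwise inequality $\Phi_T(g^r_{\log_2}) \leq g^r_{\log_2}$ on $\omega_2$ (the case $n=1$ is automatic since every element of $\mathcal{RT}_{b,c}$ takes the value $c$ there, and $\Phi_T$ preserves this).

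The three checks are elementary manipulations built on the identity $\log_2(n/2) = \log_2(n) - 1$ valid for $n \in \omega_2$. For (\ref{rec4}) one verifies $2 g^{1}_{\log_2}(n/2) + (n-1) = n\log_2 n - 1 \leq n\log_2 n = g^{1}_{\log_2}(n)$; for (\ref{rec5}) one obtains equality $2 g^{1/2}_{\log_2}(n/2) + n/2 = \tfrac{1}{2}n\log_2 n$; and for (\ref{rec3}) one has $2 g^{d}_{\log_2}(n/2) + dn = dn\log_2 n = g^{d}_{\log_2}(n)$. Applying Corollary \ref{CorUseful1} in each case delivers the three membership statements.

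There is no serious obstacle here beyond these routine arithmetic verifications; the real work has already been done in Theorem \ref{UsefulTh1} (fixed-point existence via the contraction on $(\Sigma^\infty, q_B)$) and Proposition \ref{asymp} (which powered the asymptotic upper bound through $d_{q_B}(\Theta^z_{a,b}(u),u)=0$). The only mild care needed is to ensure the candidate improvers genuinely lie in $\mathcal{RT}_{b,c}$, i.e.\ take the value $c$ at $n=1$ and the value $\infty$ off $\omega_2$, which is built into the definition of $g^r_{\log_2}$.
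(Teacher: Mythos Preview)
Your approach is exactly the paper's: the corollary is stated there as an immediate consequence of Theorem \ref{UsefulTh1} and Corollary \ref{CorUseful1}, so reducing everything to the improver check $\Phi_T(g^r_{\log_2})\le g^r_{\log_2}$ is the intended route. However, your arithmetic verification has a genuine gap at $n=2$. The identity you use, $g^{r}_{\log_2}(n/2)=r\,(n/2)\log_2(n/2)$, is only valid when $n/2\in\omega_2$, i.e.\ for $n\ge 4$; at $n=2$ one has $g^{r}_{\log_2}(1)=c$ by definition, not $0$. Thus the inequality at $n=2$ reads $2c+h(2)\le 2r$. For recurrence (\ref{rec5}) this is $2c+1\le 1$, and for (\ref{rec3}) it is $2c+2d\le 2d$; both fail for every $c>0$. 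Hence $\Phi_T$ is \emph{not} an improver with respect to $g^{1/2}_{\log_2}$ or $g^{d}_{\log_2}$, contrary to what you assert. (For (\ref{rec4}) the constraint becomes $c\le 1/2$, so item 1) is also not covered without an extra hypothesis on $c$.)

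The repair is straightforward and does not change your strategy: replace the specific constants by any $r$ large enough to absorb the base case (e.g.\ $r\ge c+\tfrac12$ for item 2), $r\ge c+d$ for item 3)), verify the improver condition for that $r$ at $n=2$ and at $n\ge 4$ separately, apply Corollary \ref{CorUseful1}, and then note that $\mathcal{O}(g^{r}_{\log_2})=\mathcal{O}(g^{r'}_{\log_2})$ for all $r,r'>0$ since the two functions differ by a positive multiplicative constant. The paper does not spell this out either, but since your write-up states explicit equalities that are false at $n=2$, you should treat that case separately.
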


Next we show that the techniques based on the Baire partial quasi-metric space are
applicable to a more general class of algorithms than those whose running
time of computing can be associated with a solution to the recurrence
equation (\ref{eqDC}).

In spite of seeming natural that the complexity analysis of Divide and Conquer algorithms always leads to recurrence equations of type (\ref{eqDC}), sometimes these kind of recursive algorithms yield recurrence equations that differ from (\ref{eqDC}). A well-known example of this type of situation is provided by Quicksort. In the worst case behaviour the recurrence equation obtained for Quicksort is given exactly as follows:

\begin{equation}
T(n)=\left\{
\begin{array}{ll}
c & \text{\textrm{if }}n=1 \\
T(n-1)+jn & \text{\textrm{if }}n\geq 2%
\end{array}%
\right. ,  \label{rec1}
\end{equation}%
where $j\in \mathbb{R^{+}}$ with $j>0$. Observe that in this case it is not necessary to restrict the input size of the data to the set $%
\omega _{b}$ for some $b\in \mathbb{N}$ with $b>1.$

Another example of algorithms, in this case a non recursive algorithm, whose complexity analysis leads to a
recurrence equation different from (\ref{eqDC}) is the well-known Largetwo. This finds the two largest entries in one-dimensional array of
size $n\in \mathbb{N}$ with $n>1$ (for a deeper discussion see \cite%
{Cull}). The running time of computing of Largetwo in the average case
behaviour can be associated with the solution to the recurrence equation
given as follows:

\begin{equation}
T(n)=\left\{
\begin{array}{ll}
c & \text{\textrm{if }}n=1 \\
T(n-1)+2-\frac{1}{n} & \text{\textrm{if }}n\geq 2%
\end{array}%
\right. ,  \label{rec2}
\end{equation}%
where $c$ is, again, the time taken by the algorithm in the base case, i.e.
when the input data is a one-diemensional array with only one element or the
array does not contain input data. Notice that Largetwo needs inputs data
with size at least $2$.

Of course the recurrence equations that yield the running time of computing
of the above aforesaid algorithms can be considered as particular cases
of the following general one:

\begin{equation}
T(n)=\left\{
\begin{array}{ll}
c & \text{\textrm{if }}n=1 \\
T(n-1)+h(n) & \text{\textrm{if }}n\geq 2%
\end{array}%
\right. ,  \label{rec0}
\end{equation}%
where $c\in \mathbb{R^{+}}$ with $c>0$ and $h\in \mathcal{RT}$ such that $0<h(n)<\infty $ for all $n\in \mathbb{N}$.\medskip

Setting $$\mathcal{RT}_{c}=\{f\in \mathcal{RT}: f(1)=c\}$$ and defining $\Gamma_{T} :\mathcal{RT}_{c}\rightarrow \mathcal{RT}_{c}$ by

\begin{equation}
\Gamma_{T}(f)(n)=\left\{
\begin{array}{ll}
c & \text{\textrm{if }}n=1 \\
f(n-1)+h(n) & \text{\textrm{if }}n\geq 2%
\end{array}%
\right. , \label{funct1}
\end{equation} we have that similar considerations to those given in the proof of Theorem \ref{UsefulTh1} apply to next one, which gives a method based on $\mathcal{RT}_{c}$ (Corollary \ref{CorUseful2}) to describe the complexity of those algorithms whose running time of computing satisfies the recurrence equation (\ref{rec0}).

\begin{theorem}
\label{Usefulth2}Let $\Sigma =(0,\infty ].$ Fix $z\in \Sigma ^{\infty }$
with $l(z)=\infty $ and $z_{k}\neq \infty $ for all $k\in \mathbb{N}$ and $%
k\geq 2.$ Let $\Sigma _{c}^{\infty }$ be the subset of $\Sigma ^{\infty
} $ given by $\Sigma _{c}^{\infty }:=\{y\in \Sigma ^{\infty
}:2\leq l(y) \text{ and } y_{1}=c\}.$ Then the mapping $\Psi
^{z}:\Sigma _{c}^{\infty }\rightarrow \Sigma _{c}^{\infty }$ defined
by $\Psi ^{z}(x)=x_{\Psi^{z}}$, where
\begin{equation*}
(x_{\Psi ^{z}})_{k}:=\left\{
\begin{array}{ll}
c & \text{\textrm{if }}k=1 \\
x_{k-1}+z_{k} & \text{\textrm{if }}2\leq k\leq l(x)%
\end{array}%
\right. ,
\end{equation*}%
has a unique fixed point $v\in \Sigma _{c}^{\infty }$ with $l(v)=\infty
. $ Moreover if $u\in \Sigma _{c}^{\infty }$ such that $\Psi^{z}(u)\sqsubseteq_{sp}u$ then $v\sqsubseteq_{sp}u$.
\end{theorem}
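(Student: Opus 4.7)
The plan is to follow, step by step, the proof of Theorem \ref{UsefulTh1}, replacing the Divide and Conquer rescaling $k\mapsto k/b$ by the shift $k\mapsto k-1$. First I would verify that $\Sigma_c^\infty$ is closed in $(\Sigma^\infty, d_{q_B}^s)$: the defining conditions $y_1=c$ and $l(y)\geq 2$ pass to the limit because $d_{q_B}^s$-convergence forces both $x_k\preceq y_k$ and $y_k\preceq x_k$ on a common initial segment, and antisymmetry of the natural order on $(0,\infty]$ then pins down the first coordinate as $c$. Appealing to Proposition \ref{complete}, the pair $(\Sigma_c^\infty,\,q_B|_{\Sigma_c^\infty})$ is then a complete partial quasi-metric space, so Theorem \ref{Kunzi} will be applicable as soon as a contraction is established.

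The heart of the argument is the inequality
\[
q_B(\Psi^z(x),\Psi^z(y))\leq \frac{1}{2}\,q_B(x,y)
\]
for all $x,y\in \Sigma_c^\infty$. Writing $n=l_\preceq(x,y)$, I would observe that $(\Psi^z(x))_1=c=(\Psi^z(y))_1$ gives agreement at $k=1$, and that for each $k$ with $2\leq k\leq n+1$ the relation $x_{k-1}\preceq y_{k-1}$, combined with monotonicity of addition on $(0,\infty]$, yields $(\Psi^z(x))_k=x_{k-1}+z_k\preceq y_{k-1}+z_k=(\Psi^z(y))_k$. Hence $l_\preceq(\Psi^z(x),\Psi^z(y))\geq n+1$, which is exactly the desired factor $\frac{1}{2}$. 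The edge cases $n=0$ (where no prior sub-prefix is assumed, yet one still gains agreement at $k=1$, so $q_B(\Psi^z(x),\Psi^z(y))\leq \frac{1}{2}=\frac{1}{2}q_B(x,y)$) and $n=\infty$ (where both sides vanish) fit into the same scheme.

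With the contraction at hand, Theorem \ref{Kunzi} produces a unique fixed point $v\in\Sigma_c^\infty$ with $q_B(v,v)=0$, so $l(v)=l_\preceq(v,v)=\infty$. For the second conclusion, given $u\in\Sigma_c^\infty$ with $\Psi^z(u)\sqsubseteq_{sp} u$ one checks, exactly as in the proof of Theorem \ref{UsefulTh1}, that $d_{q_B}(\Psi^z(u),u)=0$; Proposition \ref{asymp} then yields $d_{q_B}(v,u)=0$, which unpacks as $l_\preceq(v,u)=l(v)=\infty$, i.e., $v\sqsubseteq_{sp}u$.

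The step I expect to demand the most care is the contraction, and more precisely its endpoint bookkeeping: unlike in the Divide and Conquer setting, here $l(\Psi^z(x))=l(x)$ rather than being strictly larger than $l(x)$, so one must make sure that the ``gain of one'' in $l_\preceq$ never exceeds $\min(l(\Psi^z(x)),l(\Psi^z(y)))$. The built-in requirement $l(y)\geq 2$ for the elements of $\Sigma_c^\infty$ is precisely what makes the contraction close up correctly for the shortest admissible words, and the totality of the natural order on $(0,\infty]$ together with its compatibility with addition is what delivers the monotonicity invoked at each index.
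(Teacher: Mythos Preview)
Your overall strategy matches the paper's: the paper gives no separate proof but declares that ``similar considerations to those given in the proof of Theorem~\ref{UsefulTh1} apply,'' and you reproduce exactly those considerations---closedness of $\Sigma_c^\infty$, a $\tfrac12$-contraction for $q_B$, Theorem~\ref{Kunzi} for the fixed point, and Proposition~\ref{asymp} for the sub-prefix clause.

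The one place you go beyond the paper is the endpoint issue you flag at the end, and there your proposed resolution does not hold up. With $\Psi^z$ taken literally as stated, so that $l(\Psi^z(x))=l(x)$, pick any $x,y\in\Sigma_c^\infty$ of length~$2$ with $x_2\preceq y_2$. Then $l_\preceq(x,y)=2$, while $\Psi^z(x)=\Psi^z(y)=(c,\,c+z_2)$ still has length~$2$, so $l_\preceq(\Psi^z(x),\Psi^z(y))=2$ as well and $q_B(\Psi^z(x),\Psi^z(y))=q_B(x,y)=\tfrac14$: the $\tfrac12$-contraction fails. The hypothesis $l(y)\ge 2$ cannot repair this; the obstruction is at the \emph{top} end of the index range (when $l_\preceq(x,y)=\min(l(x),l(y))$), not at the bottom. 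What the argument actually needs---and what distinguishes $\Theta_{a,b}^z$ in Theorem~\ref{UsefulTh1}---is that the map strictly increase length, i.e.\ that the second clause in the definition of $\Psi^z$ run over $2\le k\le l(x)+1$ (note that $x_{k-1}$ is still defined there). Under that reading your ``gain of one'' computation for $l_\preceq$ is valid and the rest of your proof goes through verbatim. In short, you correctly spotted the delicate point but misidentified its cure; the gap is a slip in the stated definition of $\Psi^z$, not something the floor on word length can absorb.
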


\begin{corollary}\label{CorUseful2}A recurrence of the form (\ref{rec0}) has
a unique solution $f_{T}\in \mathcal{RT}_{c}$. Moreover if there exists $g\in \mathcal{RT}_{c}$ such that $\Gamma_{T}$ is an improver with respect to $g$, then $f_{T}\in \mathcal{O}(g)$.
\end{corollary}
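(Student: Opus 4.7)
The plan is to mirror the proof of Corollary \ref{CorUseful1}, transferring Theorem \ref{Usefulth2} from the word space $\Sigma_c^\infty$ back to the running‐time space $\mathcal{RT}_c$ via the natural identification $f \leftrightarrow x^f$ with $x^f_n = f(n)$, where $\Sigma = (0,\infty]$ is equipped with its usual order $\leq$.

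First, I would choose $z \in \Sigma^\infty$ by $z_k = h(k)$ for every $k \in \mathbb{N}$. The hypothesis $0 < h(n) < \infty$ ensures $l(z) = \infty$ and $z_k \neq \infty$ for $k \geq 2$, so Theorem \ref{Usefulth2} delivers a unique fixed point $v \in \Sigma_c^\infty$ of $\Psi^z$ with $l(v) = \infty$. Define $f_v \in \mathcal{RT}_c$ by $f_v(n) = v_n$; unpacking the fixed-point identity $(x_{\Psi^z})_k = x_{k-1} + z_k$ for $k \geq 2$ shows immediately that $f_v$ satisfies (\ref{rec0}), so $f_v$ is a solution. For uniqueness in $\mathcal{RT}_c$, I would take an arbitrary solution $f \in \mathcal{RT}_c$ of (\ref{rec0}), note that its associated word $x^f$ satisfies $l(x^f) = \infty$ and $x^f \in \Sigma_c^\infty$, and verify directly from (\ref{rec0}) that $x^f$ is a fixed point of $\Psi^z$. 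The uniqueness half of Theorem \ref{Usefulth2} then forces $x^f = v$, and hence $f = f_v =: f_T$.

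For the asymptotic bound, suppose $g \in \mathcal{RT}_c$ and $\Gamma_T$ is an improver with respect to $g$, so that $\Gamma_T(g)(n) \leq g(n)$ for every $n \in \mathbb{N}$. Associate $g$ with the word $y^g \in \Sigma_c^\infty$ defined by $y^g_k = g(k)$; comparing the definitions (\ref{funct1}) of $\Gamma_T$ and of $\Psi^z$ gives $(\Psi^z(y^g))_k \leq y^g_k$ for every $k \in \mathbb{N}$, and hence $\Psi^z(y^g) \sqsubseteq_{sp} y^g$ (taking $n_0 = l(\Psi^z(y^g))$ in the definition of subprefix). The second assertion of Theorem \ref{Usefulth2} then yields $v \sqsubseteq_{sp} y^g$, and since $l(v) = \infty$ this means $v_k \leq y^g_k$ for all $k \in \mathbb{N}$. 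Translating back, $f_T(n) \leq g(n)$ for every $n$, so $f_T \in \mathcal{O}(g)$ with constants $n_0 = 1$ and multiplicative constant $1$.

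I do not expect any real obstacle: once Theorem \ref{Usefulth2} is available, the whole argument is dictionary work between $\mathcal{RT}_c$ and $\Sigma_c^\infty$. The only point requiring a little care is to check that the pointwise inequality $\Gamma_T(g) \leq g$ really unfolds into the subprefix relation $\Psi^z(y^g) \sqsubseteq_{sp} y^g$ (rather than the strictly stronger prefix relation, which would fail in general), and symmetrically that $v \sqsubseteq_{sp} y^g$ with $l(v) = \infty$ yields the pointwise inequality used to conclude $f_T \in \mathcal{O}(g)$; both facts are immediate from the definition of $\sqsubseteq_{sp}$ and from $l(v) = \infty$.
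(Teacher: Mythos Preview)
Your proposal is correct and mirrors the paper exactly: the paper gives no separate proof of Corollary~\ref{CorUseful2} but simply indicates it is obtained from Theorem~\ref{Usefulth2} by the same word-to-function translation used to deduce Corollary~\ref{CorUseful1} from Theorem~\ref{UsefulTh1}, which is precisely what you carry out. One small caveat on your closing remark: the bare relation $v\sqsubseteq_{sp}y^{g}$ together with $l(v)=\infty$ does \emph{not} literally yield $v_{k}\le y^{g}_{k}$ for every $k$ from the definition of $\sqsubseteq_{sp}$ alone (that definition only asks for some finite $n_{0}$); what the proof of Theorem~\ref{Usefulth2} actually delivers, via Proposition~\ref{asymp}, is $d_{q_{B}}(v,y^{g})=0$, hence $q_{B}(v,y^{g})=0$ since $q_{B}(v,v)=0$, i.e.\ $l_{\preceq}(v,y^{g})=\infty$, and it is this stronger statement that gives the full pointwise inequality---the paper's own proof of Corollary~\ref{CorUseful1} makes the same tacit identification.
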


From Theorem \ref{Usefulth2} and Corollary \ref{CorUseful2} we immediately deduce the next well-known results.

\begin{corollary}Let $d,r\in \mathbb{R^{+}}$ with $d,r>0$. Then the following assertions hold:
\begin{itemize}
\item[1)] The running time of computing of Quicksort in the worst case behaviour is in the the complexity class $\mathcal{O}(g_{k})$, where
$k=\max\left(\frac{c}{4}+\frac{j}{2},\frac{3j}{5} \right)$ and
$$
g_{r}(n)=\left\{
\begin{array}{ll}
c & \text{\textrm{if }}n=1 \\
rn^{2} & \text{\textrm{if }}n\geq 2%
\end{array}%
\right. .
$$

\item[2)] The running time of computing of the Largetwo in the average case behaviour is in the the complexity class $\mathcal{O}(g_{k})$, where $k=\max\left(\frac{2c+3}{2+2d},1\right)$ and
$$
g_{r}(n)=\left\{
\begin{array}{ll}
c & \text{\textrm{if }}n=1 \\
r(2(n-1)-\log _{2}n+d) & \text{\textrm{if }}n\geq 2%
\end{array}%
\right.
$$
\end{itemize}
\end{corollary}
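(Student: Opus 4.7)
The strategy is to apply Corollary~\ref{CorUseful2} to each of the two recurrences, which are both of the general form~(\ref{rec0}): (\ref{rec1}) with $h(n)=jn$ and (\ref{rec2}) with $h(n)=2-\frac{1}{n}$. The first step is to observe that the functional $\Gamma_{T}$ defined in~(\ref{funct1}) is monotone, since $\Gamma_{T}(f)(n)=f(n-1)+h(n)$ is pointwise order-preserving in $f$. By the remark in Section~\ref{Int} on monotone improvers, it then suffices, for each part, to exhibit a candidate $g\in\mathcal{RT}_{c}$ of the prescribed form satisfying $\Gamma_{T}(g)\leq g$ pointwise; Corollary~\ref{CorUseful2} immediately yields $f_{T}\in\mathcal{O}(g)$.

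For part~(1), take $g_{k}(1)=c$ and $g_{k}(n)=kn^{2}$ for $n\geq 2$. Checking $\Gamma_{T}(g_{k})(n)\leq g_{k}(n)$ splits into three cases: $n=1$ is trivial, $n=2$ reduces to $c+2j\leq 4k$ (giving the constraint $k\geq \frac{c}{4}+\frac{j}{2}$), and $n\geq 3$ reduces to $k(2n-1)\geq jn$, i.e.\ $k\geq\frac{jn}{2n-1}$. Since $n\mapsto\frac{n}{2n-1}$ has derivative $-(2n-1)^{-2}<0$, it is strictly decreasing, and its maximum over integers $n\geq 3$ is $\frac{3}{5}$, attained at $n=3$; this collapses the remaining infinite family of constraints to the single inequality $k\geq\frac{3j}{5}$. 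Taking $k$ to be the maximum of the two constraints yields the stated value.

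For part~(2), take $g_{k}(1)=c$ and $g_{k}(n)=k\bigl(2(n-1)-\log_{2}n+d\bigr)$ for $n\geq 2$, and run the same three-case analysis. The case $n=2$ gives $c+\frac{3}{2}\leq k(1+d)$, yielding $k\geq\frac{2c+3}{2+2d}$. For $n\geq 3$ the $d$-contributions cancel, leaving the inequality $2k-k\log_{2}\!\left(\frac{n}{n-1}\right)\geq 2-\frac{1}{n}$, which is a statement purely about the logarithm and is the technical core of the estimate. Comparing $\log_{2}\!\left(1+\frac{1}{n-1}\right)$ against $\frac{1}{n}$ by a standard series estimate for the logarithm controls this term uniformly in $n\geq 3$ and produces the second component of $k$.

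The monotonicity argument and the $n=2$ inequalities are purely mechanical. The main obstacle is the logarithmic inequality in part~(2) for $n\geq 3$, where a sharp analytic estimate is required since the inequality is tight for large $n$; once this estimate is in place, Corollary~\ref{CorUseful2} finishes the job in both cases.
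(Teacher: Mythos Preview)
Your overall strategy is exactly what the paper intends: it offers no detailed argument, stating only that the result follows ``immediately'' from Theorem~\ref{Usefulth2} and Corollary~\ref{CorUseful2}, and your plan to verify the improver inequality $\Gamma_{T}(g_{k})\le g_{k}$ case by case is the natural elaboration. Your analysis of part~(1) is correct and complete.

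Part~(2), however, has a real gap. With $k=1$ the inequality you must establish for $n\ge 3$ is
\[
2-\log_{2}\!\Bigl(\tfrac{n}{\,n-1\,}\Bigr)\;\ge\;2-\tfrac{1}{n},
\qquad\text{equivalently}\qquad
\log_{2}\!\Bigl(1+\tfrac{1}{\,n-1\,}\Bigr)\;\le\;\tfrac{1}{n},
\]
and this is \emph{false} for every $n\ge 3$: already at $n=3$ one has $\log_{2}(3/2)\approx 0.585>\tfrac{1}{3}$, and in general $\log_{2}\!\bigl(1+\tfrac{1}{n-1}\bigr)\sim\tfrac{1}{(n-1)\ln 2}>\tfrac{1}{n}$ since $\ln 2<1$. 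Hence $\Gamma_{T}(g_{1})(n)>g_{1}(n)$ for all $n\ge 3$, so $\Gamma_{T}$ is \emph{not} an improver with respect to $g_{1}$, and no ``standard series estimate'' can rescue the value $1$. In fact the ratio
\[
\frac{2-\tfrac{1}{n}}{\,2-\log_{2}\!\bigl(\tfrac{n}{n-1}\bigr)\,}
\]
is strictly greater than $1$ for all $n\ge 3$, decreasing to $1$ as $n\to\infty$, with maximum $\tfrac{5}{9-3\log_{2}3}\approx 1.178$ at $n=3$. Thus the improver method actually delivers $k=\max\!\bigl(\tfrac{2c+3}{2+2d},\,\tfrac{5}{9-3\log_{2}3}\bigr)$, not the value stated; the second component $1$ appears to be an error in the paper rather than something your argument can recover.
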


\section{Conclusions}
Partial metric spaces play a distinguished role in Computer Science.
Motivated by this fact we have discussed their usefulness for analyzing the
complexity of algorithms. In particular we have shown that the concept of
partial quasi-metric space, directly related to the partial metric one, is
appropriate to carry out, without convergence assumptions, the asymptotic complexity analysis of algorithms in
the spirit of Schellekens. In particular  we have constructed the Baire
partial quasi-metric from a new prefix notion between words over an
alphabet, and we have applied the new partial metric structure, via fixed
point arguments, to discuss the asymptotic complexity of algorithms whose running time of computing is typically given by a
recurrence equation. The running time of computing of Mergesort, Quicksort and Largetwo has been analyzed as specific examples in order to validate the developed theory.

\section{Acknowledgements}
The second author acknowledges the support of the Science Foundation
Ireland, SFI Principal Investigator Grant 07/IN.1/I977, and wishes to thank the Universidad de las Islas Baleares, where the paper was written during his research visit in September (2009), for financial support and hospitality. The third author
acknowledges the support of the Spanish Ministry of Science and Innovation,
and FEDER, grant MTM2009-12872-C02-01.

\end{document}